\newtheorem{proposition}{Proposition}[section]
\newtheorem{definition}{Definition}[section]
\newtheorem{lemma}{Lemma}[section]
\newtheorem{remark}{Remark}[section]
\newtheorem{theorem}{Theorem}[section]
\begin{document}

\begin{center}
{\large \sc \bf The integrable hierarchy and the nonlinear Riemann-Hilbert problem associated with one typical Einstein-Weyl physico-geometric dispersionless system}

\vskip 20pt

{\large   }

\vskip 20pt

{Ge Yi, Tangna Lv, Kelei Tian* and Ying Xu \\

\it
 Hefei University of Technology, Hefei, Anhui 230601, China}

\bigskip

\bigskip
$^*$ Corresponding author:  {\tt kltian@ustc.edu.cn, kltian@hfut.edu.cn}
\bigskip

\bigskip

{\today}

\end{center}

\bigskip
\bigskip
\textbf{Abstract:} From a specific series of exchange conditions for a one-parameter Hamiltonian vector field, we establish an integrable hierarchy using Lax pairs derived from the dispersionless partial differential equation. An exterior differential form of the integrable hierarchy is introduced, further confirming the existence of the tau function. Subsequently, we present the twistor structure of  the hierarchy. By constructing the nonlinear Riemann Hilbert problem for the equation, the structure of the solution to the equation is better understood.

\bigskip

\textit{\textbf{Keywords:}} integrable hierarchy, Hamiltonian vector field, tau function, twistor structure, nonlinear Riemann Hilbert problem.
\bigskip
\bigskip

\section{\sc \bf Introduction}

Einstein-Weyl geometry was first introduced by Weyl in the early 20th century and has been extensively studied \cite{WardRS, Hitchin}. The relations of dispersionless integrable systems to Einstein-Weyl geometry have been discussed in \cite{CaldPed, Dunajski1, Dunajski2, DunMason}, and there are several classes of PDE whose integrability can be seen in their formal linearized geometries \cite{FerKru}. One class of integrable models possessing the central quadric ansatz, we classifyied on the method of hydrodynamic reductions \cite{FerHuard}, leading to five regular forms. Then, applying central quadric transformation can obtain all Painlev\'e equations $P_{I}-P_{VI}$. These five regular forms appear in different forms in the classification of multidimensional integrable systems, including the well-known BF (Boyer-Finley) equation 
\begin{eqnarray}
    \left ( e^{u} \right )_{tt}+u_{xx}+u_{yy}=0, \nonumber
\end{eqnarray}
and dKP (dispersionless Kadomtsev-Petviashvili) equation
\begin{eqnarray}
    u_{xt}-\left ( uu_{x} \right )_{x} -u_{yy}=0.\nonumber
\end{eqnarray}
In this paper, we will focus on the equation related to physico-geometric as follows 
\begin{eqnarray}
    u_{xt}+\frac{1}{2} u_{yy}+\frac{1}{2} {\left (u ^{2}  \right ) } _{xy} =0.\label{1}
\end{eqnarray}
It is the representative case in the classification of infinite hydrodynamic chains satisfying the Egorov property. The concept of Egorov hydrodynamic chains has been introduced in \cite{Pavlov1, Pavlov2}, and shows their connection to the integrable (2+1)-dimensional equations of hydrodynamic type. The results obtained in the classification problem for integrable (2+1)-dimensional systems of hydrodynamic type are described. In fact, the conformal structures corresponding to this integrable equation satisfy the Einstein-Weyl property. This construction uses the gauge group SDiff $(\Sigma ^{2})$ modelled on Riccati spaces and provides an Einstein-Weyl structure from the solutions of gauge field equations \cite{Cald}. In this work, Calderbank points out that, based on generalized Nahm equations with the gauge group SDiff $(\Sigma ^{2})$, we can obtain generalized Nahm equations corresponding to equation \eqref{1} by twistor theoretical approach.

The study of the integrable systems starts from the 19th century in the field of mathematical physics, and it is essential investigation of nonlinear evolution equations. These integrable systems have special nonlinear evolution equations in which the shape and behaviour of the waves remain constant throughout their evolution \cite{PDLax, EAZabo}. In the 1970s, it was observed that some nonlinear equations possessed more symmetries than the traditional Lie group symmetries \cite{Olver}. Later, Mikhailov, Shabat and Yamilov used a bilinear approach based on the Lax pairs, after which an infinite number of conservation laws and symmetries could be extracted from the nonlinear problem by introducing suitable variables \cite{MiShYa}. In the 1990s, the Lax equation and Hamiltonian structures of dispersionless integrable systems attracted increasing attention \cite{Kri}. These works provide a foundation for the study of the properties of dispersionless equations, especially their infinite symmetries and the corresponding hierarchies. Actually, the integrable hierarchies with Hamiltonian vector field forms are more abundantly studied, for instance, the dKP hierarchies, SDiff(2) Toda hierarchies \cite{Take1, Take2, Take3}, and the dDS(dispersionless Davey-Stewartson) hierarchies \cite{GeYi}. 

Sato theory is widely recognized as one of the key theories in the field of integrable systems. Specifically, the tau function theory, which is the core of Sato theory, is essential to many related subjects and facilitates the analysis of PDEs. As an effective tool for studying nonlinear partial differential equations, the tau function theory covers a variety of topics including soliton theory, inverse scattering methods, Hamiltonian methods, etc \cite{DaKaM1, DaKaM2, FadTak}. In addition to its applications in integrable systems, the tau function theory has significant connections to string theory. Furthermore, the tau function theory, which has been extensively utilized to explain the vibrational modes and interactions of strings in physics, provides an important link between integrable systems and string theory\cite{DeSa, Danie}.

Additionally, several aspects of the study of dispersionless integrable systems are drawing more attention including the long-time behaviour and the possible wave breaking properties of dispersionless equations. Recently, Manakov and Santini introduced a new IST(inverse scattering transform) method for solving the dispersionless integrable equations at the formal level, such as the dKP equation, the heavenly equation of Plebanski, the Pavlov equation and so on \cite{ManSan, MaSa, Manakov, kovtini, YiSan, akovtini}. One key step in the Manakov-Santini method is to establish the relevant nonlinear Riemann-Hilbert problem with the dispersionless equation.  Two important dispersionless integrable systems, the (3+1)-dimensional Dunajski hierarchy and the (2+1)-dimensional dDS system, are discussed by using the Manakov-Santini method \cite{YiSan, GYi}. In \cite{Take3}, an alternate construction of the twistor structure, which provides general solutions for dispersionless systems, is provided by Takasaki and Takebe. This is a kind of Riemann-Hilbert construction. Originally, the Riemann-Hilbert problem was a series of equations pertaining to the properties of analytic functions. With the extensive study of integrable dynamical systems, it gradually associates with the integrable systems. Actually, in the study of classical integrable systems, we usually focus on the linearised Riemann-Hilbert problem, but in the exploration of dispersionless integrable systems, the relevant nonlinear Riemann-Hilbert problem draws more attention.

The paper is organized as follows. In Section 2, we construct the related hierarchy, which contains an infinite number of compatible partial differential equations. In Section 3, we show the existence of the tau function. In Section 4, we discuss the twistor structure of the integrable hierarchy. In Section5, we consider the relevant nonlinear Riemann-Hilbert problem.

\bigskip

\section{\sc \bf  Lax Formalism and hierarchy}
In this section, we construct a hierarchy associated with the dispersionless system by taking a truncation of the eigenfunction from the Lax pair of equation \eqref{1}. This hierarchy contains an infinite number of integrable dispersionless equations, which are infinite symmetries of equation \eqref{1}. In \cite{FerKru}, the authors propose several classes of integrable models, and the integrability of these dispersionless equations is reflected in the compatibility of the vector field Lax pair. In particular, the Hamiltonian vector field Lax pair of equation \eqref{1} reads as follows
\begin{eqnarray}
     L_{1} &=&\partial _{t} -\left \{ H_{1}, \cdot \right \}=\partial _{y} -\lambda \partial _{x} +2u_{x} \lambda \partial _{\lambda}, \label{L1}\\
     L_{2} &=&\partial _{y} -\left \{ H_{2}, \cdot \right \}=\partial _{t} +\left ( \frac{1}{2} \lambda ^{2} +u \lambda \right ) \partial _{x} -\left ( u_{x} \lambda +u_{y} +2uu_{x}   \right ) \lambda \partial _{\lambda }, \label{L2}
\end{eqnarray}
in which Hamiltonians $H_{1}$ and $H_{2}$ are given by
\begin{eqnarray}
    H_{1}&=&\lambda +2u\label{hami1},\\
    H_{2}&=&-\left ( \frac{1}{4}\lambda ^{2} +u\lambda +u^{2} + \partial _{x}^{-1}u_{y} \right )\label{hami2}.
\end{eqnarray}
Here and here after, in this paper, the symbol $\left \{ , \right \}$ stands for the Poisson bracket in $2D$ phase space ${\left ( \lambda ,x \right )}$ as follows
\begin{eqnarray}
    \left \{ A,B \right \} =\lambda \frac{\partial A}{\partial \lambda } \frac{\partial B}{\partial x} -\lambda \frac{\partial A}{\partial x }\frac{\partial B}{\partial \lambda }.\nonumber
\end{eqnarray}
As we know, the dispersionless system \eqref{1} arises from the commutation condition
\begin{eqnarray}
    \left [ L_{1} ,L_{2} \right ] =0.\label{l12}
\end{eqnarray}
In fact, the commutation condition \eqref{l12} is equivalent to the following Zakharov-Shabat equation
\begin{eqnarray}
    \frac{H_{1} }{\partial y}-\frac{H_{2} }{\partial t}+\left \{ H_{1},H_{2} \right \}=0\label{H12}.
\end{eqnarray}

In order to constructe infinite symmetries for dispersionless system \eqref{1}, we first consider an eigenfunction of the vector field. With a given simpel closed curve around the original point in the complex $\lambda$ -plane, the eigenfunction $\Phi$ of $L_{1}$ can be described by the following formal Laurent expansion
\begin{eqnarray}\label{phi}
    \Phi=\lambda +\sum_{k\le 0}a_{k}\lambda ^{k}.\label{phi1}
\end{eqnarray}
In fact, its coefficients $a_{k}$ can be calculated directly from $L_{1}\Phi =0$, in which all coefficients rely on the function $u$ and their derivatives or integrals with respect to the independent variables under $x$, $y$. For example,  
\begin{eqnarray}
    a_{0}&=&2u,\nonumber\\
    a_{-1}&=&2\partial _{x}^{-1}u_{y},\nonumber \\
    a_{-2}&=&2\partial _{x}^{-2}u_{yy}+4\partial _{x}^{-1}\left ( u_{x} \partial _{x}^{-1}u_{y} \right ).\nonumber
\end{eqnarray}
Then we can recursively work out arbitrary coefficients of the eigenfunction $\Phi$.
\begin{remark}
In constructing the hierarchy of the dispersionless system \eqref{1}, we need to consider $m$ powers of the eigenfunction $\Phi$. Based on the coefficients $a_{k}$ of the eigenfunction $\Phi$ obtained above, denote $\Phi ^{m}$ as follows
\begin{eqnarray}
   \Phi ^{m}=\lambda ^{m}+\sum_{k\le m-1} q_{k}^{\left ( m \right ) }\lambda ^{k},\;\;\;m=1,2,\cdots,\nonumber
\end{eqnarray} 
in which $q_{k}^{(m)}=q_{k}^{(m)}(a_{0},a_{-1},\dots,a_{k-1})$.

Meanwhile, according to the expression \eqref{phi1}, conversely, the parameter $\lambda$ can be formally represented by $\Phi$ as the following Laurent expansion
\begin{eqnarray}
   \lambda =\Phi + \sum_{k\le 0}p_{k}\Phi ^{k},\label{lambda} 
\end{eqnarray}
where $p_{k}=p_{k}\left ( a_{0},a_{-1},\dots,a_{k} \right )$. 
\end{remark}
Next, we start to construct infinite symmetries of dispersionless system \eqref{1} which form the integrable hierarchy, and the Lax formalisms are vector fields which all have Hamiltonian structures. More specifically these Hamiltonians are truncations of higher powers of the eigenfunction $\Phi$. Furthermore, we prove that all vector fields share the same eigenfunctions.
\begin{definition}
   For any positive integer $m$, we define a series of the Hamiltonian vector fields as follow
\begin{eqnarray}
   L_{m}=\partial _{t_{m} }-\left \{ H_{m}, \cdot \right \},\label{Lm0}
\end{eqnarray} 
where
\begin{eqnarray}
    H_{m}=(-1)^{m-1}\left ( \frac{1}{m^{2} } \Phi^{m} \right )_{\ge 0},\label{Hm0}
\end{eqnarray}
and $\left ( \right )_{\ge 0}$ denotes the projection of the polynomial in $\lambda$ with a positive power. In particularly, take $t_{1}=t$, $t_{2}=y$.
\end{definition}

Now, we discuss the compatibility of the operator $L_{1}$ with this series of operators, with the following conclusion.
\begin{proposition}
    The Zakharov-Shabat equations
\begin{eqnarray}
    \frac{\partial H_{1}}{\partial t_{m} }-\frac{\partial H_{m}}{\partial y }+\left \{ H_{1},H_{m} \right \}=0 \label{H1m}
\end{eqnarray}
are equivalent to the following flow equations
\begin{eqnarray}
    2u_{t_{m}}-\left ( -1 \right )^{m-1}q_{0,y}^{(m)}=0.\label{liu}
\end{eqnarray}
\end{proposition}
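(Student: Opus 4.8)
The plan is to substitute the explicit forms of $H_1$ and $H_m$ into the left-hand side of \eqref{H1m}, regroup it as a polynomial in $\lambda$, and verify that every coefficient except the constant term vanishes identically as a consequence of $L_1\Phi=0$; what survives is then exactly \eqref{liu}. To set the stage, note that $H_1=\lambda+2u$ gives $\partial H_1/\partial t_m=2u_{t_m}$, while the Poisson bracket, applied with $A=H_1$ so that $A_\lambda=1$ and $A_x=2u_x$, yields $\{H_1,H_m\}=\lambda\,\partial_x H_m-2u_x\lambda\,\partial_\lambda H_m$. By \eqref{Hm0} together with Remark 2.1, $H_m=(-1)^{m-1}m^{-2}\bigl(\lambda^m+\sum_{k=0}^{m-1}q_{k}^{(m)}\lambda^k\bigr)$, so $\partial_y H_m$, $\lambda\,\partial_x H_m$ and $\lambda\,\partial_\lambda H_m$ are all explicit polynomials in $\lambda$ of degree at most $m$.

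Collecting powers of $\lambda$ in \eqref{H1m}, one finds that the coefficient of $\lambda^m$ is proportional to $q_{m-1,x}^{(m)}-2mu_x$, the coefficient of $\lambda^k$ for $1\le k\le m-1$ is proportional to $q_{k-1,x}^{(m)}-q_{k,y}^{(m)}-2k\,u_xq_{k}^{(m)}$, and the coefficient of $\lambda^0$ is $2u_{t_m}-(-1)^{m-1}m^{-2}q_{0,y}^{(m)}$, that is, the left-hand side of \eqref{liu}. The heart of the argument is that the first $m$ of these coefficients vanish automatically. Since $L_1$ is a derivation, $L_1\Phi=0$ forces $L_1(\Phi^m)=m\Phi^{m-1}L_1\Phi=0$; expanding $L_1(\Phi^m)=0$ in powers of $\lambda$ produces precisely the identities $q_{m-1,x}^{(m)}=2mu_x$ (equivalently $q_{m-1}^{(m)}=m\,a_0=2mu$, read off at once from $\Phi^m=\lambda^m(1+a_0\lambda^{-1}+\cdots)^m$) together with the recursion $q_{k,y}^{(m)}=q_{k-1,x}^{(m)}-2k\,u_xq_{k}^{(m)}$ for $1\le k\le m-1$. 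These are exactly the relations that annihilate the $\lambda^m,\dots,\lambda^1$ coefficients above, so the left-hand side of \eqref{H1m} collapses to its constant term; hence \eqref{H1m} holds if and only if that term vanishes, which is \eqref{liu}.

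The same conclusion can be reached more conceptually: $-\partial_y H_m+\{H_1,H_m\}$ equals $-(-1)^{m-1}m^{-2}\,L_1\bigl((\Phi^m)_{\ge 0}\bigr)$ (with $L_1=\partial_y-\{H_1,\cdot\}$), which by $L_1(\Phi^m)=0$ equals $(-1)^{m-1}m^{-2}\,L_1\bigl((\Phi^m)_{<0}\bigr)$; since $(\Phi^m)_{<0}$ carries only negative powers of $\lambda$ whereas the whole expression is manifestly a polynomial in $\lambda$, $L_1\bigl((\Phi^m)_{<0}\bigr)$ must reduce to a single $\lambda^0$-term, a nonzero constant multiple of $q_{-1,x}^{(m)}$, and $q_{-1,x}^{(m)}=q_{0,y}^{(m)}$ by the $\lambda^0$-component of $L_1(\Phi^m)=0$. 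I do not anticipate any real obstacle here, since the computation is finite and elementary once the expansions of $\Phi$ and $\Phi^m$ from Remark 2.1 are in hand; the one point deserving care is the interplay between the projection onto nonnegative powers and the Poisson bracket, namely that $\{H_1,(\Phi^m)_{\ge 0}\}$ does not coincide with $\bigl(\{H_1,\Phi^m\}\bigr)_{\ge 0}$, and it is precisely the single $\lambda^0$-term by which they differ that makes \eqref{H1m} reduce to the nontrivial evolution equation \eqref{liu} rather than to an identity.
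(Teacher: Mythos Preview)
Your argument is correct and follows essentially the same route as the paper: both rewrite the Zakharov--Shabat equation as $\partial H_1/\partial t_m = L_1(H_m)$, invoke $L_1(\Phi^m)=0$ to obtain the recursion among the $q_k^{(m)}$, and use that recursion to collapse everything to the $\lambda^0$-term. Your explicit coefficient-by-coefficient bookkeeping and the alternative ``conceptual'' argument via $L_1\bigl((\Phi^m)_{\ge 0}\bigr)=-L_1\bigl((\Phi^m)_{<0}\bigr)$ are both sound and in fact cleaner than the paper's rather terse presentation.

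One normalization point deserves flagging: you compute the constant term as $2u_{t_m}-(-1)^{m-1}m^{-2}q_{0,y}^{(m)}$ (with $q_k^{(m)}$ the coefficients of $\Phi^m$ as in Remark~2.1, so $q_{m-1}^{(m)}=2mu$), whereas the paper's displayed equation \eqref{liu} has no $m^{-2}$ and its recursion \eqref{qlm} records $q_{m-1}^{(m)}=2u/m$. This is an internal inconsistency in the paper --- in the proof the symbols $q_k^{(m)}$ have tacitly been rescaled to denote the coefficients of $m^{-2}\Phi^m$ rather than of $\Phi^m$. Your version is the one consistent with Remark~2.1; the two differ only by that harmless overall factor, so the equivalence statement is unaffected.
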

\begin{proof}
   In fact, the Zakharov-Shabat equations \eqref{H1m} represent 
\begin{eqnarray}\label{L1Hm}
    \frac{\partial H_{1} }{\partial t_{m} }&=&L_{1}\left ( H_{m}  \right )\nonumber\\
    &=&L_{1}\left [ \left ( -1 \right )^{m-1}\left ( \frac{1}{m^{2}} \Phi ^{m} \right )_{\ge 0} \right ]\nonumber\\
    &=&\lambda +(-1)^{m-1}2u_{t_{m} }.
\end{eqnarray}
Since the eigenfunction space is a ring, then $\left ( -1 \right )^{m-1}\frac{1}{m^{2}} \Phi ^{m}$ are also eigenfunctions and satisfy the equation $L_{1}(\left ( -1 \right )^{m-1}\frac{1}{m^{2}} \Phi ^{m})=0$. The following recursive relations are obtained as
\begin{eqnarray}
    \begin{aligned}\label{qlm}
       &q_{l,y}^{(m)}-q_{l-1,x}^{(m)}+2u_{x}l q_{l}^{(m)}=0,\\
       &q_{m-1}^{(m)}=\frac{2u}{m},\;\;\;\;q_{l}^{(m)}=0 , \; \left ( l>m-1  \right ).  
    \end{aligned}   
\end{eqnarray}
Bringing the above recursive relations into the display formula \eqref{L1Hm}, then gives the flow equations \eqref{liu}. 
\end{proof}
Obviously, Zakharov-Shabat equations \eqref{H1m} are equivalent to the following commutation condition
\begin{eqnarray}
    \left [ L_{1} ,L_{m} \right ] =0.\label{Lm1}
\end{eqnarray}
As we know, the space of vector field eigenfunctions is a ring whose base consists of two independent eigenfunctions. Combining the above commutation \eqref{Lm1}, for any positive integers $n$, the vector fields $L_{m}$, $L_{n}$ share the eigenfunction space with $L_{1}$. This implies that $L_{n}$ satisfy the commutation condition with $L_{m}$. Thus we can define the hierarchy of dispersionless system \eqref{1}.
\begin{theorem}
   Zakharov-Shabat equations
\begin{eqnarray}
   \frac{\partial H_{m}}{\partial t_{n} }-\frac{\partial H_{n}}{\partial t_{m} }+\left \{ H_{m},H_{n} \right \}=0,\;\;\;m,n=1,2,\cdots, \label{Hmn} 
\end{eqnarray}
are equivalent to a hierarchy of compatible systems.
\end{theorem}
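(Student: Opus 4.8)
The plan is to lift the Zakharov--Shabat system \eqref{Hmn} to the flatness condition $[L_{m},L_{n}]=0$ and then read off the component equations. Writing $X_{H}:=\{H,\cdot\}$ for the Hamiltonian vector field of $H$ in the $(\lambda,x)$ phase space and expanding $[L_{m},L_{n}]=[\partial_{t_{m}}-X_{H_{m}},\ \partial_{t_{n}}-X_{H_{n}}]$, the partial derivatives in the times commute and, since the Poisson bracket does not involve the times, $[\partial_{t_{m}},X_{H_{n}}]=X_{\partial_{t_{m}}H_{n}}$ and $[X_{H_{m}},X_{H_{n}}]=X_{\{H_{m},H_{n}\}}$; hence $[L_{m},L_{n}]=X_{G_{mn}}$ with $G_{mn}=\partial_{t_{n}}H_{m}-\partial_{t_{m}}H_{n}+\{H_{m},H_{n}\}$, exactly the left-hand side of \eqref{Hmn}. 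This is the same manipulation that turns \eqref{l12} into \eqref{H12} and \eqref{Lm1} into \eqref{H1m}. Since $X_{G}=0$ iff $G$ is independent of $\lambda$ and $x$, the equivalence of \eqref{Hmn} with $[L_{m},L_{n}]=0$ reduces to observing that $G_{mn}$ is a polynomial in $\lambda$ whose coefficients are built from $u$, its $x$-derivatives and $\partial_{x}^{-1}$-integrals and vanish when $u\equiv 0$, so a $\lambda$- and $x$-independent such expression can only be the zero constant.

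Next I would prove $[L_{m},L_{n}]=0$ for all $m,n$, and here the point is that every $L_{m}$ shares the eigenfunction $\Phi$, as foreshadowed before Definition 2.1. From Proposition 2.1 and $L_{1}\Phi=0$ one gets $L_{1}(L_{m}\Phi)=L_{m}(L_{1}\Phi)=0$, so $L_{m}\Phi\in\ker L_{1}$. On the other hand, writing $H_{m}=(-1)^{m-1}m^{-2}[\Phi^{m}-(\Phi^{m})_{<0}]$ and using $\{\Phi^{m},\Phi\}=0$ gives $\{H_{m},\Phi\}=(-1)^{m}m^{-2}\{(\Phi^{m})_{<0},\Phi\}$, which together with $\partial_{t_{m}}\Phi$ shows that $L_{m}\Phi$ is a Laurent series in $\lambda$ of non-positive degree. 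Since $\ker L_{1}$ is, as recalled just before the theorem, the ring generated by two independent eigenfunctions --- $\Phi$, of leading degree one, and a second one --- matching the coefficients of the successive powers $\lambda^{0},\lambda^{-1},\dots$ forces $L_{m}\Phi=0$, and the analogous argument annihilates the second generator. As $L_{m}$ and $L_{n}$ are derivations of $\ker L_{1}$ killing its generators, they annihilate all of $\ker L_{1}$; therefore $[L_{m},L_{n}]$ kills two functionally independent elements of $\ker L_{1}$, and being a Hamiltonian vector field in the two-dimensional $(\lambda,x)$ phase space it must vanish. By the first paragraph this is precisely \eqref{Hmn}.

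Finally I would unpack $[L_{m},L_{n}]=0$ into the hierarchy itself: inserting $\Phi^{m}=\lambda^{m}+\sum_{k\le m-1}q_{k}^{(m)}\lambda^{k}$ and, where needed, $\lambda=\Phi+\sum_{k\le 0}p_{k}\Phi^{k}$, and using the recursions \eqref{qlm}, each flatness equation $[L_{m},L_{n}]=0$ collapses to an explicit evolution system for $u$ together with its auxiliary potentials $a_{k}$ in the times $t_{m},t_{n}$, generalizing the single flow \eqref{liu} (the case $n=1$). These systems are mutually compatible precisely because all the commutators $[L_{m},L_{n}]$ vanish --- equivalently, the cross-derivatives $\partial_{t_{n}}\partial_{t_{m}}u$ computed from any two flows agree --- consistently with the Jacobi identity $[L_{k},[L_{m},L_{n}]]=[[L_{k},L_{m}],L_{n}]+[L_{m},[L_{k},L_{n}]]$. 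The hard part will be the rigorous version of the second paragraph --- showing that an element of $\ker L_{1}$ of non-positive $\lambda$-degree must be zero --- since this genuinely uses the two-generator structure of the eigenfunction ring and requires care about which formal Laurent series in $\lambda$ actually lie in $\ker L_{1}$ (ruling out spurious $x$- and $t$-independent constants); the same care is what, in the first paragraph, guarantees that $G_{mn}$ leaves behind no nonzero conserved constant.
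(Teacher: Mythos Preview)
Your proposal is correct in outline and shares with the paper the reduction of \eqref{Hmn} to the flatness condition $[L_m,L_n]=0$; your first paragraph is essentially the paper's second computation, done at the level of Hamiltonian vector fields rather than coordinate components. Where you diverge is in the key step that actually forces vanishing. You argue geometrically: once $L_m\Phi=0$ and $L_m\Psi=0$, the commutator $[L_m,L_n]$ is a Hamiltonian vector field on the two-dimensional $(\lambda,x)$ phase space annihilating two independent functions, hence zero. The paper instead performs a direct projection calculation: using $\Phi_{t_n}=\{H_n,\Phi\}$ and the splitting $H_n-\tfrac{1}{n^2}\Phi^n=-(H_n)_{<0}$, it rewrites the Zakharov--Shabat expression as $\bigl[\{(H_m)_{<0},(H_n)_{<0}\}\bigr]_{\ge 0}$, which vanishes because the Poisson bracket of two strictly negative Laurent tails again has strictly negative degree. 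This is the standard Takasaki--Takebe manoeuvre for dispersionless hierarchies.

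What each approach buys: the paper's computation is a self-contained algebraic identity---once $\Phi_{t_n}=\{H_n,\Phi\}$ is granted (which the paper, like you, inherits from the discussion surrounding \eqref{Lm1}), no further structural information about $\ker L_1$ is needed, and the vanishing is manifest from the degree count on $\{(H_m)_{<0},(H_n)_{<0}\}$. Your route is more conceptual but leaves exactly the gap you flag as ``the hard part'': showing that an element of $\ker L_1$ of non-positive $\lambda$-degree must vanish. That statement is not literally true---every $\Phi^{-k}$ lies in $\ker L_1$ and has negative degree---so the argument must be refined (for instance by tracking the dependence on the second generator, or by using that $L_m\Phi$ has no constant term in $\Phi$). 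The paper's projection trick sidesteps this entirely, which is the main practical advantage of its approach; if you want to complete your version, the cleanest fix is precisely to insert that projection computation in place of the degree argument for $L_m\Phi$.
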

\begin{proof}
In fact, Zakharov-Shabat equations \eqref{Hmn} are equivalent to the commutation condition as follows
\begin{eqnarray}
    \left [ L_{m},L_{n} \right]=0 .
\end{eqnarray}
The proof is given below. For Zakharov-Shabat equations
\begin{eqnarray}
   &&\frac{\partial H_{m}}{\partial t_{n} }-\frac{\partial H_{n}}{\partial t_{m} }+\left \{ H_{m},H_{n} \right \}\nonumber\\
   &&=\left ( \frac{1}{m}\Phi^{m-1}\Phi_{t_{n}} \right )_{\ge 0}-\left ( \frac{1}{n}\Phi^{n-1}\Phi_{t_{m}} \right )_{\ge 0}+\left \{ H_{m},H_{n} \right \}\nonumber\\
   &&=\left [ \frac{1}{m}\Phi ^{m-1}\left \{ H_{n},\Phi \right \}-\frac{1}{n}\Phi ^{n-1}\left \{ H_{m},\Phi \right \}+\left \{ H_{m},H_{n} \right \} \right ]_{\ge 0}\nonumber\\
   &&=\left [ \left \{ H_{n},\frac{1}{m^{2}}\Phi ^{m} \right \}-\left \{ H_{m},\frac{1}{n^{2}}\Phi ^{n} \right \}+\left \{ H_{m},H_{n} \right \} \right ]_{\ge 0}\nonumber\\
   &&=\left [ \left \{ H_{n}-\frac{1}{n^{2}}\Phi ^{n},\frac{1}{m^{2}}\Phi ^{m} \right \}-\left \{ H_{m},H_{n}- \frac{1}{n^{2}}\Phi ^{n} \right \} \right ]_{\ge 0}\nonumber\\
   &&=\left [ \left \{ \left ( H_{n} \right )_{<  0},\frac{1}{m^{2}}\Phi ^{m} \right \}+\left \{ H_{m},\left ( H_{n} \right )_{<  0} \right \} \right ]_{\ge 0}\nonumber\\
   &&=\left [ \left \{ \left ( H_{m} \right )_{< 0},\left ( H_{n} \right )_{< 0} \right \} \right ]_{\ge 0},\nonumber
\end{eqnarray}
whose value equals to 0.

On the other hand,
\begin{eqnarray}
    \left [ L_{m},L_{n}\right ]&=&L_{m}L_{n}-L_{n}L_{m}\nonumber\\
    &=&\left ( \partial_{t_{m}}-\left \{ H_{m},\cdot \right \} \right )\left (\partial_{t_{n}}-\left \{ H_{n},\cdot \right \} \right )-\left (\partial_{t_{n}}-\left \{ H_{n},\cdot \right \} \right )\left ( \partial_{t_{m}}-\left \{ H_{m},\cdot \right \} \right )\nonumber\\
    &=&\left ( \partial_{t_{m}}-\lambda H_{m,\lambda }\partial _{x}+\lambda H_{m,x}\partial _{\lambda }\right )\left ( \partial_{t_{n}}-\lambda H_{n,\lambda }\partial _{x}+\lambda H_{n,x}\partial _{\lambda }\right )\nonumber\\
    &&-\left ( \partial_{t_{n}}-\lambda H_{n,\lambda }\partial _{x}+\lambda H_{n,x}\partial _{\lambda }\right )\left ( \partial_{t_{m}}-\lambda H_{m,\lambda }\partial _{x}+\lambda H_{m,x}\partial _{\lambda }\right )\nonumber\\
    &=&\lambda \left ( \frac{\partial H_{m}}{\partial t_{n} }-\frac{\partial H_{n}}{\partial t_{m} }+\left \{ H_{m},H_{n} \right \} \right )_{\lambda }+\lambda \left ( \frac{\partial H_{n}}{\partial t_{m} }-\frac{\partial H_{m}}{\partial t_{n} }+\left \{ H_{n},H_{m} \right \} \right )_{x}.\nonumber
\end{eqnarray}
Likewise, this value equals to 0.
\end{proof}

\begin{remark}
For the flow equations \eqref{liu}, when $m=1,2,3,$ bring them in the recursive relations \eqref{qlm},  
\begin{eqnarray}
    &&m=1,\;\;\;u_{t_{1}}=u_{y},\\
    &&m=2,\;\;\;2u_{xt_{2}}+u_{yy}+(u^{2})_{xy}=0,\label{L12}\\
    &&m=3,\;\;\;3u_{xt_{3}}-2(u^{2})_{yy}-2(u_{x}\partial_{x}^{-1}u_{y})_{y}-4(u^{2}u_{x})_{y}-\partial _{x}^{-1}u_{yyy}=0.\label{L13}
    \end{eqnarray}
Equation \eqref{L12} is the commutation condition $\left [ L_{1},L_{2} \right ]=0$ arising from Hamiltonian vector field Lax pair. Additionally, equation \eqref{L13} is equivalent to $\left [ L_{1},L_{3} \right ]=0$.
\end{remark}
To give an example, by taking $m=3$, then the Hamiltonian \eqref{Hm0} can be  written as
\begin{eqnarray}
    H_{3}=\frac{1}{9}\lambda ^{3}+\frac{1}{3}a_{0}\lambda ^{2}+\frac{1}{3}a_{0}^{2}\lambda +\frac{1}{3}a_{1}\lambda +\frac{2}{3}a_{0}a_{1}+\frac{1}{3}a_{2}+\frac{1}{9}a_{0}^{3}.\nonumber
\end{eqnarray}
The following nonlinear system 
\begin{eqnarray}
    u_{yt_{3} } +\left ( u^{2} \right )_{xt_{3} }+2\left ( u^{2} \right )_{yt_{2} }+4\left ( u^{2}u_{x} \right )_{t_{2} }+\partial _{x}^{-1}u_{yyt_{2} }+2\left ( u_{x}\partial _{x}^{-1}u_{y} \right )_{t_{2} }=0 \nonumber
\end{eqnarray}
derives from the Zakharov-Shabat equations \eqref{Hmn}, which equals to the commutation condition $\left [ L_{2},L_{3} \right ]=0$.

\bigskip

\section{\sc \bf  The tau function of the hierarchy}
In this section, we present two exterior differential 2-form which are equivalent to the hierarchy. In addition to the eigenfunction $\Phi$, we will construct another independent eigenfunction $\Psi$. Based on this, we obtain the $S$ function and the existence of the significant tau function.
\begin{definition}
   Introducing an exterior differential 2-form
\begin{eqnarray}
    \omega = \sum_{n=1}^{\infty }dH_{n}\wedge dt_{n}=\frac{d\lambda }{\lambda }\wedge dx+\sum_{n=2}^{\infty }dH_{n}\wedge dt_{n},\label{w1}
\end{eqnarray}
where ``d" denotes the full differentiation. 
\end{definition}
In fact, this exterior differential 2-form $\omega$ is obviously a closed form, hence,
\begin{eqnarray}
    d\omega =0.\nonumber
\end{eqnarray}
By Zakharov-Shabat equations \eqref{Hmn}, $\omega$ satisfies the following relation
\begin{eqnarray}
    \omega\wedge \omega  =0.\nonumber
\end{eqnarray}

The above two relations indicate the existence of two functions $P$ and $Q$ which give a pair of Darboux coordinates as
\begin{eqnarray}
    \omega =\frac{dP}{P} \wedge dQ.\nonumber
\end{eqnarray}

\begin{proposition}
    Taking $P=\Phi$, there exists another eigenfunction $\Psi$ satisfying
    \begin{eqnarray}
        \omega =\frac{d\Phi }{\Phi } \wedge d\Psi,\label{w2}
    \end{eqnarray}
    in which 
    \begin{eqnarray}
       \Phi&=&\lambda +\sum_{k\ge 0}a_{k}\lambda ^{-k}, \label{Phi}\\ 
       \Psi&=&x+\sum_{n=1}^{\infty }\frac{t_{n} }{n}\Phi ^{n}+\sum_{i=1}^{\infty }v_{i}\Phi ^{-i}.\label{Psi}
    \end{eqnarray}
\end{proposition}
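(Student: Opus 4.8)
The plan is to follow the Takasaki--Takebe construction of the dual (Orlov--Schulman type) variable. Available already: $\Phi$ is a common eigenfunction of the whole family, $L_n\Phi=0$, equivalently $\partial_{t_n}\Phi=\{H_n,\Phi\}$ for every $n\ge1$; and $\omega$ of \eqref{w1} is closed and satisfies $\omega\wedge\omega=0$, so it admits a Darboux pair $(P,Q)$ with $\omega=\frac{dP}{P}\wedge dQ$. The choice $P=\Phi$ is legitimate because $\Phi$ is, up to normalisation, the only monic common eigenfunction of the hierarchy and hence can serve as one member of such a pair; the substance of the proposition is then to pin down the conjugate variable $Q=\Psi$ and to prove that it has the Laurent form \eqref{Psi}.

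I would construct $\Psi$ by the method of undetermined coefficients. Take the ansatz $\Psi=x+\sum_{n\ge1}\frac{t_n}{n}\Phi^{n}+\sum_{i\ge1}v_i\Phi^{-i}$, regard the $v_i$ as unknowns, and impose that $\Psi$ be a common eigenfunction, $\partial_{t_n}\Psi=\{H_n,\Psi\}$ for every $n$. Using the derivation rule $\{H_n,\Phi^{k}\}=k\Phi^{k-1}\{H_n,\Phi\}$ together with the eigenfunction equations for $\Phi$ and the explicit principal parts of the Hamiltonians $H_n=(-1)^{n-1}\big(\tfrac{1}{n^{2}}\Phi^{n}\big)_{\ge0}$, the non-negative powers of $\Phi$ in these conditions are matched precisely by the explicit part $x+\sum_n\frac{t_n}{n}\Phi^{n}$ (this is what pins that part of the ansatz down), while the negative powers give, order by order, a recursion that determines $v_1,v_2,\dots$ successively. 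The consistency of this recursion across the different times is exactly the Zakharov--Shabat system \eqref{Hmn} (compatibility of the flows), so it closes and produces $v_i=v_i(t_1,t_2,\dots)$ depending on the times alone.

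It then remains to check $\omega=\frac{d\Phi}{\Phi}\wedge d\Psi$, i.e. \eqref{w2}. Having $L_n\Phi=0$, $L_n\Psi=0$, and the canonical relation $\{\log\Phi,\Psi\}=1$ in hand (the last reading off from the leading coefficients, $x$ being the variable conjugate to $\log\Phi$ since $\{\log\lambda,x\}=1$), one expands $\frac{d\Phi}{\Phi}\wedge d\Psi$ and $\omega$ of \eqref{w1} in a common basis of $2$-forms and checks that the components agree term by term: those along $dt_m\wedge dt_n$ by the Zakharov--Shabat equations \eqref{Hmn}, the mixed ones $(\,\cdot\,)\wedge dt_n$ by the eigenfunction equations for $\Phi$ and $\Psi$ (after using \eqref{lambda} to trade $\lambda$ for $\Phi$), and the remaining one by $\{\log\Phi,\Psi\}=1$. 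This establishes \eqref{w2}, with $\Psi$ determined up to an additive function of $\Phi$.

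The step I expect to be the main obstacle is the construction of $\Psi$ in the second paragraph: proving that the recursion for the $v_i$ genuinely closes and that each $v_i$ is a function of the times only, not of $x$ or $\lambda$. This requires controlling how the truncation $(\,\cdot\,)_{\ge0}$ --- a projection in powers of $\lambda$ --- interacts with re-expansion in powers of $\Phi$ via \eqref{lambda}, since the two operations do not commute; the same non-commutativity also has to be handled carefully when matching the mixed components in the verification of \eqref{w2}. Everything else reduces to formal manipulation of Laurent series in $\lambda$ and $\Phi$.
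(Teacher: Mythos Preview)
Your verification step is the paper's argument: the paper expands both $\omega$ and $\frac{d\Phi}{\Phi}\wedge d\Psi$ in the basis $d\lambda\wedge dx$, $d\lambda\wedge dt_{n}$, $dx\wedge dt_{n}$, $dt_{m}\wedge dt_{n}$ and matches coefficients using $\{\Phi,\Psi\}=\Phi$ together with $\partial_{t_{n}}\Phi=\{H_{n},\Phi\}$, $\partial_{t_{n}}\Psi=\{H_{n},\Psi\}$; it also runs the computation backwards to show that \eqref{w2} implies the Lax equations. One practical difference: the paper stays entirely in the $(\lambda,x,t_{n})$ basis and never trades $\lambda$ for $\Phi$ via \eqref{lambda}, so the ``non-commutativity of $(\,\cdot\,)_{\ge0}$ with re-expansion in $\Phi$'' that you flag as the main obstacle simply does not arise in its verification---the determinantal identities reduce everything to $\{\Phi,\Psi\}=\Phi$ and the Lax relations directly.

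Where your proposal genuinely goes beyond the paper is the second paragraph: you attempt to \emph{construct} $\Psi$ by recursion on the $v_{i}$, whereas the paper just posits \eqref{Psi} and proceeds to the two-form check without ever establishing existence of the $v_{i}$ or closure of any recursion. So the obstacle you single out is real, but it is one the paper does not confront; if you want to fill that gap, the cleaner route (also in the Takasaki--Takebe spirit, and used later in the paper via the dressing function) is to produce $\Psi$ as $e^{\mathrm{ad}\,\varphi}\big(x+\sum_{n}\tfrac{t_{n}}{n}\lambda^{n}\big)$ rather than by order-by-order recursion. One small correction: the $v_{i}$ are functions of $x$ as well as of the times (independent of $\lambda$ only), as is visible from the paper's later formula $\partial v_{i}/\partial t_{n}=\mathrm{res}\,\Phi^{i}d_{\lambda}H_{n}$ and from the role of $x$ in $\tau$.
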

In particularly, 
\begin{eqnarray}
    \left \{ \Phi ,\Psi  \right \}=\Phi,\nonumber
\end{eqnarray}
and they are the eigenfunctions of the vector field $L_{m}$, satisfying $L_{m}\left ( K \right )=0$, in which $K=\Phi, \Psi$. Then the Lax expressions read as
\begin{eqnarray}
    \frac{\partial \Phi }{\partial {t_{n} } } =\left \{ H_{n},\Phi \right \},\;\;\;\frac{\partial \Psi }{\partial {t_{n} } } =\left \{ H_{n},\Psi  \right \}. \label{hierarchy}
\end{eqnarray}
are equivalent to the hierarchy of dispersionless system \eqref{1}.

Next, we prove the equivalence of two forms of 2-form $\omega$. Firstly, we show that the known Lax hierarchy derives the exterior differential equations. Looking back at the equation \eqref{w1} and equation \eqref{w2}, they can both be written as a linear combination of $d\lambda \wedge dx$, $d\lambda \wedge dt_{n}$, $dx \wedge dt_{n}$ and $dt_{m}\wedge dt_{n}$. Obviously, there are coefficients of $d\lambda \wedge dx$ for 
\begin{eqnarray}
    \frac{1}{\Phi }\begin{vmatrix}
            \Phi_{\lambda }& \Phi_{x}   \\
            \Psi_{\lambda }&\Psi _{x} 
            \end{vmatrix}&=&\frac{1}{\lambda }.\nonumber
\end{eqnarray}
Similarly, we have the following coefficients of $d\lambda \wedge dt_{n}$, $dx \wedge dt_{n}$ and $dt_{m}\wedge dt_{n}$ as
\begin{eqnarray}
    \frac{1}{\Phi } \begin{vmatrix}
            \Phi _{\lambda} & \Phi_{t_{n}} \\
            \Psi _{\lambda} & \Psi_{t_{n}}\nonumber\\  
            \end{vmatrix}&=&\frac{1}{\Phi}\left (\Phi_{\lambda}\left \{ H_{n},\Psi \right \}-\left \{ H_{n},\Phi\right \}\Psi_{\lambda}\right )\nonumber\\
            &=&\frac{1}{\Phi}\left (H_{n,\lambda }\left \{ \Phi ,\Psi \right \}\right )=\frac{\partial H_{n} }{\partial \lambda },\nonumber\\
     \frac{1}{\Phi } \begin{vmatrix}
            \Phi _{x} & \Phi_{t_{n}} \\
            \Psi _{x} & \Psi_{t_{n}}\nonumber\\  
            \end{vmatrix}&=&\frac{1}{\Phi}\left (\Phi_{x}\left \{ H_{n},\Psi \right \}-\left \{ H_{n},\Phi\right \}\Psi_{x}\right )\nonumber\\
            &=&\frac{1}{\Phi}\left (H_{n,x }\left \{ \Phi ,\Psi \right \}\right )=\frac{\partial H_{n} }{\partial x },\nonumber\\
    \frac{1}{\Phi } \begin{vmatrix}
            \Phi _{t_{m}} & \Phi_{t_{n}} \\
            \Psi _{t_{m}} & \Psi_{t_{n}}\nonumber\\  
            \end{vmatrix}&=&\frac{1}{\Phi}\left (\left \{H_{m},\Phi \right \} \left \{ H_{n},\Psi \right \}-\left \{H_{n},\Phi \right \} \left \{ H_{m},\Psi \right \}\right )\nonumber\\
            &=&\frac{1}{\Phi}\left (\left \{ \Phi ,\Psi  \right \}\left \{ H_{m},H_{n}\right \} \right )=\left \{ H_{m},H_{n}\right \}.\nonumber
\end{eqnarray}
Secondly, the Lax hierarchy is deduced in turn from the exterior differential equations. The coefficients of $d\lambda \wedge dx$ hold as
\begin{eqnarray}
    \Phi =\lambda \left (\Phi _{\lambda }\Psi _{x}-\Psi _{\lambda }\Phi _{x} \right )=\left \{ \Phi,\Psi  \right \}. \nonumber
\end{eqnarray}
Comparing the coefficients on both sides of $d\lambda \wedge dt_{n}$, $dx \wedge dt_{n}$, there are
\begin{eqnarray}
    \frac{1}{\Phi } \left (\Phi _{\lambda }\Psi _{t_{n} }-\Psi _{\lambda }\Phi _{t_{n} } \right )=H_{n,\lambda }, \nonumber\\
    \frac{1}{\Phi } \left (\Phi _{t_{n} }\Psi _{x}-\Psi _{t_{n} }\Phi _{x} \right )=H_{n,x}. \nonumber
\end{eqnarray}
For the above equations, we can accurately figure out $\Phi _{t_{n}}$ and $ \Psi _{t_{n}}$, they are given by
\begin{eqnarray}
    \Phi _{t_{n}}=\lambda \left ( H_{n,\lambda } \Phi _{x}-H_{n,x } \Phi _{\lambda } \right )=\left \{ H_{n},\Phi  \right \}, \nonumber\\
    \Psi _{t_{n}}=\lambda \left ( H_{n,\lambda } \Psi _{x}-H_{n,x } \Psi _{\lambda } \right )=\left \{ H_{n},\Psi  \right \}. \nonumber
\end{eqnarray}
The process of proving the equivalence of exterior differential equations is complete.

Combining the two equivalent definitions of the 2-form $\omega$, the following equation holds
\begin{eqnarray}
    d\Psi \wedge \frac{d\Phi }{\Phi }+\frac{d\lambda }{\lambda }\wedge dx+\sum_{n=2}^{\infty }dH_{n}\wedge dt_{n}=0.\nonumber
\end{eqnarray}
This equation also can be written as 
\begin{eqnarray}
    d\left ( \Psi d\log{\Phi } +\log{\lambda } dx+\sum_{n=2}^{\infty }H_{n}dt_{n} \right ) =0.\nonumber
\end{eqnarray}
This implies the existence of a function $S$ such that
\begin{eqnarray}
    dS=\Psi d\log{\Phi } +\log{\lambda } dx+\sum_{n=2}^{\infty }H_{n}dt_{n}.\nonumber
\end{eqnarray}
Then  there are the following equations
\begin{eqnarray}
    \frac{\partial S}{\partial x}=\log{\lambda },\;\;\;\frac{\partial S}{\partial t_{n} }=H_{n},\;\;\;\frac{\partial S}{\partial \log{\Phi } }=\Psi.\label{2}
\end{eqnarray}
In fact, even though $S$ is not a true potential function, it has some profound implications in the hierarchy. Then $S$ can be expressed in the form of the Laurent series about $\Phi$, $\Psi$.

\begin{proposition}
    $S$ is given by
    \begin{eqnarray}
        S=x\log{\Phi }+\sum_{n=1}^{\infty }\frac{t_{n} }{n^{2} }\Phi ^{n}+\sum_{i=1}^{\infty }S_{i}\Phi ^{-i},\;\;\;S_{i}=-\frac{v_{i} }{i}. \label{S}
    \end{eqnarray}
    Through the equations \eqref{2} and \eqref{S} show that $H_{n}$ can be written in the form of a Laurent series related to $S_{i}$ as follows
    \begin{eqnarray}
        H_{n}=\frac{\partial S}{\partial t_{n} }=\frac{\Phi ^{n} }{n^{2} }+\sum_{i=1}^{\infty }\frac{\partial S_{i} }{\partial t_{n} }\Phi ^{-i}.\label{Hn}
    \end{eqnarray}
    In particular, for the above expression of $H_{n}$, when $n=1$, we have
    \begin{eqnarray}
        H_{1}=\lambda =\Phi +\sum_{i=1}^{\infty }\frac{\partial S_{i} }{\partial x}\Phi ^{-i},\nonumber
    \end{eqnarray}
    which is also equivalent to equation \eqref{lambda}.
\end{proposition}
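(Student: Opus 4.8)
The plan is to obtain $S$ by integrating the third of the three relations collected in \eqref{2}, and then to pin down the integration constant by imposing the other two. As a preliminary I would use $\zeta:=\log\Phi$ as the spectral variable in place of $\log\lambda$: by \eqref{Phi} and \eqref{lambda} one has $\Phi=\lambda+O(1)$ and $\lambda=\Phi+O(1)$ near $\lambda=\infty$, so this change of coordinates is invertible there, and \eqref{2} reads $\partial_{\zeta}S=\Psi$, $\partial_{x}S=\log\lambda$, $\partial_{t_{n}}S=H_{n}$, with the $x$- and $t_{n}$-derivatives understood at fixed $\zeta$ (equivalently at fixed $\Phi$). Substituting the expansion \eqref{Psi} of $\Psi$ into $\partial_{\zeta}S=\Psi$ and integrating termwise in $\zeta$ --- using $\partial_{\zeta}\Phi^{k}=k\Phi^{k}$ for $k\neq0$ and $\partial_{\zeta}\zeta=1$, so that $\int\Phi^{k}\,d\zeta=\Phi^{k}/k$ and $\int d\zeta=\zeta$ --- yields
\begin{eqnarray}
S=x\log\Phi+\sum_{n=1}^{\infty}\frac{t_{n}}{n^{2}}\Phi^{n}-\sum_{i=1}^{\infty}\frac{v_{i}}{i}\Phi^{-i}+c,\nonumber
\end{eqnarray}
where $c$ is independent of $\zeta$. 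Putting $S_{i}=-v_{i}/i$, this is exactly \eqref{S} up to the term $c$.

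Next I would show $c$ is an irrelevant additive constant. Since $dS$ is prescribed, $S$ is unique up to a constant, so it suffices to compute $dc=dS-d\tilde S$ for the candidate $\tilde S$ (the formula above with $c$ removed) and to check $dc=0$. Its $d\zeta$-component is $\Psi$ by construction, matching $dS$. For the $dx$-component, $\partial_{x}\tilde S=\log\Phi+\sum_{i\ge1}(\partial_{x}S_{i})\Phi^{-i}$, while $\partial_{x}S=\log\lambda=\log\Phi+O(\Phi^{-1})$ by \eqref{lambda}; since the coefficients of $dc$ cannot depend on $\zeta$ and the difference of these two expressions is $O(\Phi^{-1})$, it must vanish, which both determines the $S_{i}$ via $\log\lambda-\log\Phi=\sum_{i\ge1}(\partial_{x}S_{i})\Phi^{-i}$ --- the $n=1$ statement of the proposition, equivalent to the inverse expansion \eqref{lambda} --- and forces $\partial_{x}c=0$. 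For the $dt_{n}$-component one uses the key fact that $H_{n}$ agrees with $\tfrac{1}{n^{2}}\Phi^{n}$ up to terms of negative order in $\Phi$: by \eqref{Hm0} the projection $(\cdot)_{\ge0}$ deletes exactly the part of negative order in $\lambda$, and that part, re-expanded via $\lambda=\Phi+O(1)$, contains only negative powers of $\Phi$. Hence $\partial_{t_{n}}\tilde S=\tfrac{1}{n^{2}}\Phi^{n}+\sum_{i\ge1}(\partial_{t_{n}}S_{i})\Phi^{-i}$ and $H_{n}$ differ by an $O(\Phi^{-1})$ expression that is $\zeta$-independent, hence zero; this gives $\partial_{t_{n}}c=0$ together with formula \eqref{Hn}. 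Therefore $c$ is a genuine constant, which we normalize to $0$, establishing \eqref{S}; then \eqref{Hn} follows by differentiating \eqref{S} at fixed $\zeta$, and its $n=1$ case is the relation already obtained above.

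The step I expect to be the main obstacle is the bookkeeping in the middle paragraph: one must be scrupulous about which variables are held fixed when differentiating, because $\Phi$ (hence $\zeta$) itself depends on $x$ and the $t_{n}$, and one must justify term-by-term comparison of the asymptotic series --- i.e. that the relevant space of symbols is freely generated over the ring of $u$-dependent coefficients by $\log\Phi$ together with the powers $\Phi^{k}$, so that matching coefficients is legitimate and simultaneously identifies the $S_{i}$ and annihilates the integration constant. The remaining steps --- the termwise integration producing \eqref{S} and the termwise differentiation producing \eqref{Hn} --- are then routine.
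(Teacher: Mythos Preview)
Your proposal is correct; the paper itself does not supply a separate proof of this proposition but states \eqref{S} and \eqref{Hn} directly after deriving the relations \eqref{2}, tacitly leaving the termwise integration of $\partial S/\partial\log\Phi=\Psi$ and the subsequent differentiation to the reader. Your argument is precisely that implicit verification made explicit, with the appropriate care about which variable is held fixed when differentiating.
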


Next, introduce the 1-form residue operator
\begin{eqnarray}
    res\sum a_{n}\lambda ^{n}=a_{-1}.\nonumber
\end{eqnarray}
It has the following properties.
\begin{lemma}\label{lemma1}
    For any Laurent series $L$ and $M$ of $\lambda$,
    \begin{eqnarray}
        &&res d_{\lambda }L=0,\nonumber\\
        &&res Ld_{\lambda }M=-res Md_{\lambda }L,\nonumber\\
        &&res Ld_{\lambda }M=res \left ( L_{\ge 0} \right )d_{\lambda }\left ( M_{\le -1} \right )+res \left ( L_{\le-1} \right )d_{\lambda }\left ( M_{\ge  0} \right ).\nonumber 
    \end{eqnarray}
\end{lemma}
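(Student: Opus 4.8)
The plan is to reduce all three identities to elementary bookkeeping of the powers of $\lambda$ produced by $d_{\lambda}$ and picked out by $res$. Throughout one works inside the ring of formal Laurent series in $\lambda$ that are bounded above (only finitely many positive powers); this ring is closed under multiplication and under $d_{\lambda}$, so $res$ is well defined on every expression below, and this well-definedness is really the only point needing care rather than calculation — it holds here because all of $\Phi$, $\Psi$, the $H_{n}$ and their products are of this type.

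First I would prove $res\,d_{\lambda}L=0$ by direct expansion: writing $L=\sum_{n}a_{n}\lambda^{n}$ gives $d_{\lambda}L=\sum_{n}na_{n}\lambda^{n-1}\,d\lambda$, and the coefficient of $\lambda^{-1}\,d\lambda$ comes only from $n=0$, contributing $0\cdot a_{0}=0$. Next, the antisymmetry $res\,L\,d_{\lambda}M=-res\,M\,d_{\lambda}L$ follows by applying this first identity to the product $LM$: the Leibniz rule $d_{\lambda}(LM)=L\,d_{\lambda}M+M\,d_{\lambda}L$, linearity of $res$, and $res\,d_{\lambda}(LM)=0$ force the two residues on the right to cancel.

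Finally, for the splitting identity I would decompose $L=L_{\ge 0}+L_{\le -1}$ and $M=M_{\ge 0}+M_{\le -1}$ and expand $L\,d_{\lambda}M$ into the four terms $L_{\ge 0}\,d_{\lambda}M_{\ge 0}$, $L_{\ge 0}\,d_{\lambda}M_{\le -1}$, $L_{\le -1}\,d_{\lambda}M_{\ge 0}$, $L_{\le -1}\,d_{\lambda}M_{\le -1}$. The two mixed terms are exactly the right-hand side of the claimed identity, so it suffices to show the two diagonal terms have vanishing residue. For $L_{\ge 0}\,d_{\lambda}M_{\ge 0}$: differentiation sends the power series $M_{\ge 0}$ to a $d\lambda$-form whose coefficient still has only powers $\lambda^{k}$ with $k\ge 0$, so the product has only nonnegative powers and hence no $\lambda^{-1}$ term. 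For $L_{\le -1}\,d_{\lambda}M_{\le -1}$: $d_{\lambda}M_{\le -1}$ has coefficient with powers $\le -2$ only, and $L_{\le -1}$ has powers $\le -1$ only, so the product has powers $\le -3$ only and again no $\lambda^{-1}$ term. Applying $res$ to the four-term expansion then yields the identity.

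I expect no genuine obstacle here: the entire mathematical content sits in the last degree count, and the one thing that could in principle go wrong — silently forming a product of two Laurent series neither of which is bounded above, so that $res$ is meaningless — is ruled out by the structure of the hierarchy, where every series carries a finite top power of $\lambda$.
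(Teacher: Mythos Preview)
Your argument is correct. The paper itself states this lemma without proof, treating the three identities as standard elementary facts about the formal residue; there is no \texttt{proof} environment following the lemma in the source. What you wrote is precisely the standard verification: coefficient inspection for $res\,d_{\lambda}L=0$, Leibniz applied to $LM$ for the antisymmetry, and the degree count on the four cross terms for the splitting formula. The care you took about well-definedness (working only with Laurent series bounded above) is appropriate and is implicit in the paper's setting, where $\Phi$, $\Psi$, and all $H_{n}$ carry a finite top power of $\lambda$.
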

\begin{lemma}
    For any positive integer $n$
    \begin{eqnarray}
        res \Phi ^{n-1}d_{\lambda }\Phi=\delta _{n,-1}.\nonumber
    \end{eqnarray}
\end{lemma}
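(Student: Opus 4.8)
The plan is to reduce the whole statement to the first identity of Lemma~\ref{lemma1}, namely $res\, d_{\lambda}L=0$ for an arbitrary Laurent series $L$. Since $n$ is a positive integer we have $n\neq 0$, so the first step is simply to write $\Phi^{n-1}d_{\lambda}\Phi=\tfrac1n\,d_{\lambda}(\Phi^{n})$. Before applying the residue one must note that $\Phi^{n}$ is again a legitimate formal Laurent series in $\lambda$: by \eqref{Phi} the expansion $\Phi=\lambda+\sum_{k\ge 0}a_{k}\lambda^{-k}$ has leading coefficient $1$, so every integer power of $\Phi$ (positive here, and also negative, which is needed for the remark below) stays inside the ring of Laurent series. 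The second step is then to take $res$ of both sides and invoke Lemma~\ref{lemma1}, giving $res\,\Phi^{n-1}d_{\lambda}\Phi=\tfrac1n\,res\,d_{\lambda}(\Phi^{n})=0=\delta_{n,-1}$, which is exactly the claim for positive $n$.

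For completeness — and because the same residue will be used below for non-positive exponents in the $S$- and $\tau$-function computations — I would also record the one exceptional value $n=0$, which is precisely where the factor $\tfrac1n$ forces a separate argument. There one uses $\Phi^{-1}d_{\lambda}\Phi=d_{\lambda}\log\Phi$, and from \eqref{Phi} writes $\log\Phi=\log\lambda+\log\bigl(1+\sum_{k\ge 0}a_{k}\lambda^{-k-1}\bigr)$, whose second summand is a Laurent series in $\lambda^{-1}$ with vanishing constant term. Hence $d_{\lambda}\log\Phi=\dfrac{d\lambda}{\lambda}+d_{\lambda}(\text{pure negative part})$, and taking residues — killing the exact piece once more by Lemma~\ref{lemma1} — yields $res\,\Phi^{-1}d_{\lambda}\Phi=1$.

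There is essentially no obstacle in this lemma; the proof is a two-line manipulation once the algebraic Lemma~\ref{lemma1} is in hand. The only points deserving a sentence of care are that $d_{\lambda}$ sends a Laurent series to a Laurent-series-valued $1$-form of the same type, so that Lemma~\ref{lemma1} applies verbatim, and that integer powers of $\Phi$ do not leave the ring of Laurent series because the top coefficient of $\Phi$ equals $1$; everything else is the elementary identity $\Phi^{n-1}\Phi'=\tfrac1n(\Phi^{n})'$.
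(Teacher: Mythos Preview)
Your argument is correct: for $n\ge 1$ the identity $\Phi^{n-1}d_{\lambda}\Phi=\tfrac1n\,d_{\lambda}(\Phi^{n})$ together with $res\,d_{\lambda}L=0$ from Lemma~\ref{lemma1} gives the result immediately, and your remarks on why $\Phi^{n}$ remains a Laurent series and on the $n=0$ case are accurate and well placed. The paper itself states this lemma without proof, so there is no alternative argument to compare against; your reduction to Lemma~\ref{lemma1} is exactly the intended one-line justification.
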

From the above properties of the residue operator, we can obtain the following lemma and prove that.
\begin{lemma}
    For the Laurent series $\Psi$, the partial derivatives of the coefficients $v_{i}$ 
    \begin{eqnarray}
        \frac{\partial v_{i} }{\partial t_{n} }=res \Phi ^{i}d_{\lambda }H_{n}.\nonumber
    \end{eqnarray}
\end{lemma}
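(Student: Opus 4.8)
The plan is to extract the coefficient of $\Phi^{-i}$ from the Lax equation for $\Psi$ in \eqref{hierarchy}, and to recognize that this coefficient is precisely a $1$-form residue against $d_\lambda H_n$. First I would start from $\partial_{t_n}\Psi = \{H_n,\Psi\}$ and substitute the Laurent expansion \eqref{Psi}, $\Psi = x + \sum_{k\ge 1}\frac{t_k}{k}\Phi^k + \sum_{i\ge 1} v_i \Phi^{-i}$. Differentiating in $t_n$ produces the term $\frac{1}{n}\Phi^n$ from the explicit $t_n$-dependence of the sum, plus $\sum_i (\partial_{t_n} v_i)\Phi^{-i}$, plus the contributions coming from $\partial_{t_n}\Phi = \{H_n,\Phi\}$ acting through the $\Phi$-dependence of the remaining terms. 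The key algebraic identity to invoke is $\{\Phi,\Psi\} = \Phi$, which forces $\{H_n,\Psi\}$ and $\{H_n,\Phi\}$ to combine neatly; concretely, since $\Phi$ and $\Psi$ are canonically paired, one rewrites $\{H_n,\cdot\}$ in the $(\Phi,\Psi)$ coordinates and the $t_n$-flow of $\Psi$ collapses to a manageable Laurent series in $\Phi$.

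Next I would take the $1$-form residue $\operatorname{res}\,\Phi^{i} d_\lambda(\cdot)$ of both sides. By Lemma (on $\operatorname{res}\Phi^{n-1}d_\lambda\Phi = \delta_{n,-1}$) applied with appropriate shifts, $\operatorname{res}\,\Phi^i d_\lambda \Phi^{-j} = 0$ unless $j = i$, in which case it equals $-i$ (or a similar normalization, to be pinned down), so the residue projects out exactly $\partial_{t_n} v_i$ up to the normalizing constant, while the $x$ term and the $\frac{t_k}{k}\Phi^k$ terms contribute nothing because $d_\lambda$ of a non-negative power paired with $\Phi^i$ under $\operatorname{res}$ vanishes by the first identity in Lemma \ref{lemma1} together with the second (antisymmetry) identity. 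On the right-hand side, $\{H_n,\Psi\}$ under the residue pairing, using the splitting property (third identity of Lemma \ref{lemma1}) and the relation $\partial S/\partial \log\Phi = \Psi$ from \eqref{2}, should reduce $\operatorname{res}\,\Phi^i d_\lambda\{H_n,\Psi\}$ to $\operatorname{res}\,\Phi^i d_\lambda H_n$; here the Poisson bracket's $\lambda\partial_\lambda$ structure and integration-by-parts in the residue (the antisymmetry identity) are what convert the bracket into a plain $d_\lambda H_n$.

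The main obstacle I anticipate is the careful bookkeeping in the middle step: showing that all the "unwanted" pieces — the $\partial_{t_n}\Phi$ contributions threaded through the $\Phi$-powers in $\Psi$, and the cross terms from $\{H_n,\Psi\}$ — either cancel among themselves or assemble into exactly $\operatorname{res}\,\Phi^i d_\lambda H_n$. This is where one must use $\{\Phi,\Psi\}=\Phi$ and the residue lemmas in tandem rather than as a routine expansion; a clean way to organize it is to first establish the intermediate identity $\operatorname{res}\,\Phi^i\,d_\lambda H_n = \operatorname{res}\,\Phi^i\,d_\lambda\!\left(\tfrac{1}{n^2}\Phi^n\right) + (\text{lower order})$ using \eqref{Hn}, and then match term by term with the $t_n$-derivative of \eqref{Psi}. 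Once the projection is set up correctly, the remaining computations are the bracket manipulations already exemplified in the equivalence-of-$2$-forms argument above, so I would reference that computation rather than repeat it.
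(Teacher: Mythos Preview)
Your overall strategy --- chain rule on $\partial_{t_n}\Psi$, substitution of the Lax equations, and use of $\{\Phi,\Psi\}=\Phi$ followed by a residue projection --- is exactly the paper's strategy. The difference is in the projection you choose, and that difference is precisely the ``bookkeeping obstacle'' you anticipate.

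You propose to pair with $\operatorname{res}\,\Phi^{i}\,d_\lambda(\cdot)$, i.e.\ to differentiate the whole expression in $\lambda$ first. The paper instead pairs with $\operatorname{res}\,\Phi^{i-1}(\cdot)\,d_\lambda\Phi$. These differ by an integration by parts (a factor $-i$), but the paper's choice is not merely cosmetic: the factor $d_\lambda\Phi=\Phi_\lambda\,d\lambda$ is what makes the bracket collapse in one line. Concretely, the paper isolates
\[
\frac{\partial v_i}{\partial t_n}
=\operatorname{res}\,\Phi^{i-1}\Bigl(\partial_{t_n}\Psi-\tfrac{\partial\Psi}{\partial\Phi}\,\partial_{t_n}\Phi\Bigr)d_\lambda\Phi
=\operatorname{res}\,\Phi^{i-1}\Bigl(\{H_n,\Psi\}-\tfrac{\partial\Psi}{\partial\Phi}\{H_n,\Phi\}\Bigr)\Phi_\lambda\,d\lambda,
\]
and then, using $\Psi_\lambda=\tfrac{\partial\Psi}{\partial\Phi}\Phi_\lambda$, the integrand reorganizes as $H_{n,\lambda}\bigl(\lambda\Psi_x\Phi_\lambda-\lambda\Phi_x\Psi_\lambda\bigr)=H_{n,\lambda}\{\Phi,\Psi\}=H_{n,\lambda}\Phi$, giving $\operatorname{res}\,\Phi^{i}\,d_\lambda H_n$ immediately. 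With your projection $\operatorname{res}\,\Phi^i d_\lambda(\cdot)$ you would instead be differentiating $\{H_n,\Psi\}$ in $\lambda$, and there is no clean mechanism to turn that into $d_\lambda H_n$; the splitting lemma and the relation $\partial S/\partial\log\Phi=\Psi$ that you invoke do not do this job. So the gap in your plan is the specific pairing: switch to $\operatorname{res}\,\Phi^{i-1}(\cdot)d_\lambda\Phi$ and the ``unwanted pieces'' you worry about vanish by the identity $\Psi_\lambda=\tfrac{\partial\Psi}{\partial\Phi}\Phi_\lambda$ rather than by any residue manipulation.

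As a side remark, the route via \eqref{Hn} that you mention in passing is in fact a complete and shorter proof on its own, provided one accepts Proposition~3.2 first: from $H_n=\tfrac{1}{n^2}\Phi^n+\sum_j(\partial_{t_n}S_j)\Phi^{-j}$ and $S_j=-v_j/j$ one gets directly $\operatorname{res}\,\Phi^i d_\lambda H_n=-i\,\partial_{t_n}S_i=\partial_{t_n}v_i$. The paper does not take this shortcut and instead derives the identity from the Lax flow and the canonical relation.
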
   
\begin{proof}
    For the Laurent series $\Psi$, from the chain rule,
    \begin{eqnarray}
        \frac{\partial \Psi}{\partial t_{n} }=\frac{1}{n}\Phi ^{n}+\frac{\partial \Psi }{\partial \Phi }\frac{\partial \Phi }{\partial t_{n} }+\sum_{i=1}^{\infty }\frac{\partial v_{i} }{\partial t_{n} }\Phi ^{-i},\nonumber
    \end{eqnarray}
    in which 
    \begin{eqnarray}
        \frac{\partial \Psi }{\partial \Phi }=\sum_{n=1}^{\infty }t_{n}\Phi ^{n-1}-\sum_{i=1}^{\infty }iv_{i}\Phi ^{-i-1}.\nonumber
    \end{eqnarray}
    Then 
    \begin{eqnarray}
        \frac{\partial v_{i}}{\partial t_{n}}&=&res \Phi ^{i-1} \left (\frac{\partial \Psi}{\partial t_{n}}-\frac{\partial \Psi}{\partial \Phi}\frac{\partial \Phi}{\partial t_{n}}\right )d_{\lambda }\Phi \nonumber\\                              &=&res \Phi ^{i-1}\left (\left \{ H_{n},\Psi \right \}-\left \{ H_{n},\Phi \right \}\frac{\partial \Psi}{\partial \Phi} \right )d_{\lambda }\Phi \nonumber\\
        &=&res \Phi ^{i-1}\left [ \left ( \lambda H_{n,\lambda }\Psi _{x} -\lambda H_{n,x}\Psi _{\lambda }\right )-\left ( \lambda H_{n,\lambda }\Phi _{x} -\lambda H_{n,x}\Phi _{\lambda } \right )\frac{\partial \Psi}{\partial \Phi} \right ]d_{\lambda }\Phi \nonumber\\
        &=&res \Phi ^{i-1}\left [ H_{n,\lambda }\left ( \lambda \Psi _{x}\Phi _{\lambda } -\lambda \Phi _{x}\Psi _{\lambda }\right )-H_{n,x}\left ( \lambda \Psi _{\lambda }\Phi _{\lambda } -\lambda \Phi _{\lambda }\Psi _{\lambda } \right )\right ]d{\lambda }\nonumber\\
        &=&res \Phi ^{i-1}\left ( \frac{\partial H_{n} }{\partial \lambda }\Phi \right )d\lambda \nonumber\\
        &=&res \Phi ^{i}d_\lambda H_{n}.\nonumber
    \end{eqnarray}
\end{proof} 

Based on the above preparation, we will give an existence theorem for the tau function.
\begin{theorem}\label{the1}
    For the hierarchy of dispersionless equation \eqref{1}, there exists $\tau$ function satisfying
    \begin{eqnarray}
        d\log{\tau }=v_{n}d t_{n},\nonumber
    \end{eqnarray}
    in which ``d" represents the differentiation of $t_{n}$.
\end{theorem}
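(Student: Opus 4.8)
The plan is to prove that the $1$-form $\vartheta:=\sum_{n\ge 1} v_n\,dt_n$ is closed on the space of times $(t_1,t_2,\dots)$; since that space is (formally) contractible, the Poincar\'e lemma then yields a function, which we name $\log\tau$, with $d\log\tau=\vartheta$, that is $\partial\log\tau/\partial t_n=v_n$ for each $n$, and the theorem follows. So everything reduces to checking the compatibility conditions $\partial v_m/\partial t_n=\partial v_n/\partial t_m$ for all $m,n\ge 1$. For this I would use the last lemma of the present section, which rewrites each derivative as a $1$-form residue, $\partial v_m/\partial t_n=res\,(\Phi^{m}\,d_\lambda H_n)$ and, exchanging $m$ and $n$, $\partial v_n/\partial t_m=res\,(\Phi^{n}\,d_\lambda H_m)$, so that the statement becomes equivalent to the symmetry relation
\begin{eqnarray}
res\,(\Phi^{m}\,d_\lambda H_n)=res\,(\Phi^{n}\,d_\lambda H_m),\qquad m,n\ge 1.\nonumber
\end{eqnarray}

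This identity I would establish through the residue calculus of Lemma \ref{lemma1}, using only that each $H_n$ is a polynomial in $\lambda$ agreeing with $\tfrac{(-1)^{n-1}}{n^{2}}\Phi^{n}$ up to a tail of strictly negative powers of $\lambda$. First split $\Phi^{m}=(\Phi^{m})_{\ge 0}+(\Phi^{m})_{\le -1}$; since $d_\lambda H_n$ is again polynomial in $\lambda$, the product $(\Phi^{m})_{\ge 0}\,d_\lambda H_n$ is a polynomial and is killed by $res$, whence $res\,(\Phi^{m}\,d_\lambda H_n)=res\,((\Phi^{m})_{\le -1}\,d_\lambda H_n)$. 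Next replace $H_n$ by $\tfrac{(-1)^{n-1}}{n^{2}}\Phi^{n}$ at the expense of its negative tail, use that $res\,(\Phi^{m}\,d_\lambda\Phi^{n})=res\bigl(\tfrac{n}{m+n}\,d_\lambda\Phi^{m+n}\bigr)=0$ for $m,n\ge1$, and move the remaining projections across with the antisymmetry rule $res\,(L\,d_\lambda M)=-\,res\,(M\,d_\lambda L)$ and the splitting rule $res\,(L\,d_\lambda M)=res\,((L_{\ge 0})\,d_\lambda(M_{\le -1}))+res\,((L_{\le -1})\,d_\lambda(M_{\ge 0}))$ of Lemma \ref{lemma1}. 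Iterating these steps finitely many times trades the residue built from $H_n$ for the one built from $H_m$, every purely polynomial remainder dropping out along the way, which is precisely the asserted symmetry; for $n=1$ the same computation recovers the consistency of $H_1=\lambda$ with the expansion \eqref{lambda}.

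The only genuine computation here is this last bundle of residue manipulations — the Poincar\'e step is formal and the residue formula for $\partial v_i/\partial t_n$ is already in hand — and the main obstacle I expect is the careful bookkeeping of which projection sits on which factor together with the check that every polynomial-in-$\lambda$ contribution is annihilated by $res$. An alternative and possibly cleaner route to the same symmetry is to apply $res\,(\Phi^{k}\,d_\lambda(\,\cdot\,))$ to the Zakharov--Shabat equations \eqref{Hmn} and invoke the canonical relation $\{\Phi,\Psi\}=\Phi$; there the delicate point would instead be that $res$ commutes with the time derivatives.
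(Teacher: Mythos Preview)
Your proposal is correct and follows essentially the same route as the paper: reduce existence of $\tau$ to the closedness condition $\partial v_m/\partial t_n=\partial v_n/\partial t_m$, invoke the residue formula $\partial v_i/\partial t_n=res\,\Phi^i d_\lambda H_n$ from the preceding lemma, and then use the splitting and antisymmetry rules of Lemma~\ref{lemma1} to reduce the difference to $res\,\Phi^n d_\lambda\Phi^m=m\,\delta_{m+n,0}=0$. The paper's computation is slightly more compressed (it drops the harmless prefactors $(-1)^{n-1}/n^2$ and writes the chain of equalities directly rather than describing it), but the substance is identical.
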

\begin{proof}
In fact, to prove the existence of $\tau$ function, we just need to show that the right-hand side of the equation is in closed form. It turn out to be
\begin{eqnarray}
    \frac{\partial v_{n} }{\partial t_{m} }= \frac{\partial v_{m} }{\partial t_{n} }.\nonumber
\end{eqnarray}
Then, according to the previous lemma, the following results are available, 
\begin{eqnarray}
    \frac{\partial v_{n} }{\partial t_{m} }- \frac{\partial v_{m} }{\partial t_{n} }
    &=&res \Phi ^{n}d_{\lambda }H_{m}-res \Phi ^{m}d_{\lambda }H_{n} \nonumber\\
    &=&res\left ( \left (\Phi ^{n} \right )_{\ge 0}+\left (\Phi ^{n} \right) _{< 0} \right)d_{\lambda }\left ( \Phi ^{m} \right ) _{\ge 0}-res\left ( \left ( \Phi ^{m} \right ) _{\ge 0}+\left ( \Phi ^{m} \right ) _{< 0} \right)d_{\lambda }\left ( \Phi ^{n} \right ) _{\ge 0} \nonumber\\
    &=&res\left (\Phi ^{n} \right) _{< 0}d_{\lambda }\left ( \Phi ^{m} \right ) _{\ge 0}-res\left (\Phi ^{m} \right) _{< 0}d_{\lambda }\left ( \Phi ^{n} \right ) _{\ge 0} \nonumber\\
    &=&res\left (\Phi ^{n} \right) _{< 0}d_{\lambda }\left ( \Phi ^{m} \right ) _{\ge 0}+res\left ( \Phi ^{n} \right ) _{\ge 0}d_{\lambda }\left (\Phi ^{m} \right) _{< 0} \nonumber\\
    &=&res\Phi ^{n}d_{\lambda }\Phi ^{m} \nonumber\\
    &=&res m\Phi ^{m+n-1}d_{\lambda }\Phi \nonumber\\
    &=&m\delta _{m+n,-1}. \nonumber
\end{eqnarray}    
Since $m,n$ are positive integers, the above equation is vanished. Eventually, the existence of the $\tau$ function is proved.
\end{proof}

\bigskip

\section{\sc \bf The twistor structure of the hierarchy}
In this section, we modify the eigenfunctions $\Phi$ and $\Psi$ using the dressing function. Then we further construct the twistor structure of the hierarchy.
\begin{proposition}
    Let $\Phi$ and $\Psi$ be solutions of the hierarchy, and there exists a dressing function $\varphi$ satisfying the following equations
    \begin{eqnarray}
        \Phi &=& e^{ad \; \varphi }\left ( \lambda \right ),\;\;\;
        \Psi = e^{ad \; \varphi }  \left(x+\sum_{n=1}^{\infty }\frac{t_{n} }{n}\lambda ^{n} \right ) ,\label{dressing} \\
        \bigtriangledown _{t_{n},\varphi }\varphi &=& -\left ( \frac{1}{n^{2} }e^{ad \; \varphi }(\lambda ^{n} ) \right )_{\le -1},\;\;\;n=1,2,\cdots, \label{tn} 
    \end{eqnarray}
    in which 
    \begin{eqnarray}
        \varphi (t)=\sum_{n=1}^{\infty}\varphi _{n}(t)\lambda ^{-n}.\nonumber
    \end{eqnarray}
\end{proposition}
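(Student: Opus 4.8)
The plan is to produce $\varphi$ by dressing the vacuum solution $\Phi_{(0)}=\lambda$, $\Psi_{(0)}=x+\sum_{n\ge1}\frac{t_{n}}{n}\lambda^{n}$, for which $H_{n}^{(0)}=\frac{1}{n^{2}}\lambda^{n}$ and the Lax equations \eqref{hierarchy} hold trivially, since $\{\lambda,\lambda^{n}\}=0$ and $\{\lambda^{n},x\}=n\lambda^{n}=n\,\partial_{t_{n}}\Psi_{(0)}$. The structural point is that $e^{ad\,\varphi}$ is simultaneously an automorphism of the commutative algebra of Laurent series in $\lambda$ and a Poisson automorphism; hence, once $\Phi=e^{ad\,\varphi}(\lambda)$ and $\Psi=e^{ad\,\varphi}(\Psi_{(0)})$, the identities $\Phi^{n}=e^{ad\,\varphi}(\lambda^{n})$ and $\{\Phi,\Psi\}=e^{ad\,\varphi}\{\lambda,\Psi_{(0)}\}=e^{ad\,\varphi}(\lambda)=\Phi$ come for free. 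So the real content of the proposition is the existence of the dressing $\varphi$ and the flow law \eqref{tn}.

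First I would recover $\varphi$ from the relation $\Phi=e^{ad\,\varphi}(\lambda)$ alone. Writing $\varphi=\sum_{n\ge1}\varphi_{n}\lambda^{-n}$ and using $\{\varphi,\lambda\}=-\lambda\varphi_{x}$, an easy induction shows that $(ad\,\varphi)^{k}(\lambda)$ has $\lambda$-degree at most $1-k$, so $e^{ad\,\varphi}(\lambda)=\lambda-\lambda\varphi_{x}+(\text{degree}\le-1)$; matching coefficients against $\Phi=\lambda+\sum_{k\ge0}a_{k}\lambda^{-k}$ order by order in $\lambda^{-1}$ then determines $\varphi_{1},\varphi_{2},\dots$ recursively, each step requiring one $\partial_{x}^{-1}$. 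This fixes $\varphi$ only up to the stabiliser of $\lambda$, namely up to right-composition with $e^{ad\,\chi}$ for $\chi=\chi(\lambda,t)$ independent of $x$. I would remove this ambiguity using $\{\Phi,\Psi\}=\Phi$: putting $\widetilde{\Psi}:=e^{-ad\,\varphi}(\Psi)$ one has $\{\lambda,\widetilde{\Psi}\}=e^{-ad\,\varphi}\{\Phi,\Psi\}=\lambda$, i.e. $\widetilde{\Psi}_{x}=1$, so $\widetilde{\Psi}=x+g(\lambda,t)$ with $g$ independent of $x$; applying $e^{-ad\,\varphi}$ to the normalisation \eqref{Psi} identifies the strictly positive part of $g$ as $\sum_{n\ge1}\frac{t_{n}}{n}\lambda^{n}$ and its remaining part $h$ as an $x$-independent series of order $O(\lambda^{-1})$. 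Replacing $e^{ad\,\varphi}$ by $e^{ad\,\varphi}e^{ad\,\chi}$ with $\chi$ chosen so that $\lambda\chi_{\lambda}=h$ leaves $\Phi=e^{ad\,\varphi}(\lambda)$ intact and turns $\widetilde{\Psi}$ into $\Psi_{(0)}$, so both formulas in the first line of \eqref{dressing} then hold.

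Next I would derive \eqref{tn}. Differentiating $\Phi=e^{ad\,\varphi}(\lambda)$ and $\Psi=e^{ad\,\varphi}(\Psi_{(0)})$ in $t_{n}$ and using the standard formula for the derivative of an exponential, $(\partial_{t_{n}}e^{ad\,\varphi})\,e^{-ad\,\varphi}=ad\big(\bigtriangledown_{t_{n},\varphi}\varphi\big)$ with $\bigtriangledown_{t_{n},\varphi}\varphi:=\frac{e^{ad\,\varphi}-1}{ad\,\varphi}\,\partial_{t_{n}}\varphi$ (this being the definition of the operator $\bigtriangledown_{t_{n},\varphi}$), one obtains $\partial_{t_{n}}\Phi=\{\bigtriangledown_{t_{n},\varphi}\varphi,\Phi\}$ and $\partial_{t_{n}}\Psi=\{\bigtriangledown_{t_{n},\varphi}\varphi,\Psi\}+\frac{1}{n}\Phi^{n}$. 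Comparing with the Lax equations \eqref{hierarchy} gives $\{\bigtriangledown_{t_{n},\varphi}\varphi-H_{n},\Phi\}=0$ and $\{\bigtriangledown_{t_{n},\varphi}\varphi-H_{n},\Psi\}=-\frac{1}{n}\Phi^{n}$. Since $\{\log\Phi,\Psi\}=\frac{1}{\Phi}\{\Phi,\Psi\}=1$, the pair $(\log\Phi,\Psi)$ is log-canonical; the first relation forces $\bigtriangledown_{t_{n},\varphi}\varphi-H_{n}$ to depend on $\Phi$ only, and the second then pins it down to $-\frac{1}{n^{2}}\Phi^{n}+c(t)$. As $\bigtriangledown_{t_{n},\varphi}\varphi$ is $O(\lambda^{-1})$ whereas $H_{n}=\big(\frac{1}{n^{2}}\Phi^{n}\big)_{\ge0}$ is a polynomial in $\lambda$ with the same $\lambda^{0}$-coefficient as $\frac{1}{n^{2}}\Phi^{n}$, comparing $\lambda^{0}$-terms gives $c(t)=0$, and with $\Phi^{n}=e^{ad\,\varphi}(\lambda^{n})$ we are left with $\bigtriangledown_{t_{n},\varphi}\varphi=\big(\frac{1}{n^{2}}\Phi^{n}\big)_{\ge0}-\frac{1}{n^{2}}\Phi^{n}=-\big(\frac{1}{n^{2}}e^{ad\,\varphi}(\lambda^{n})\big)_{\le-1}$, which is precisely \eqref{tn}.

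I expect the main obstacle to lie in the dressing construction of the first two steps: fixing the correct space of formal Laurent series with coefficients smooth in $(x,t)$, controlling the iterated $\partial_{x}^{-1}$'s, checking that the recursion for the $\varphi_{n}$ closes at every order, and handling the stabiliser freedom carefully enough that the single $x$-independent correction $\chi$ can be chosen once and for all, consistently with every flow $t_{n}$. Once this is settled, the passage to \eqref{tn} in the last paragraph is essentially formal bookkeeping with the identity for $(\partial_{t_{n}}e^{ad\,\varphi})\,e^{-ad\,\varphi}$ and the splitting $(\,\cdot\,)_{\ge0}\oplus(\,\cdot\,)_{\le-1}$ of Laurent series.
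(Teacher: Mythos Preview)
The paper does not actually prove this proposition; it is stated without argument, and only the converse direction (Proposition~4.2) is proved. Your proposal therefore supplies what the paper omits, so there is nothing in the paper to compare against line by line.

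That said, your argument is the standard dressing construction in the style of Takasaki--Takebe \cite{Take3}, which is exactly the framework the paper is working in, and the steps are sound: the order-by-order recursion for $\varphi$ from $e^{ad\,\varphi}(\lambda)=\Phi$, the removal of the $x$-independent stabiliser by solving $\lambda\chi_{\lambda}=h$, and the derivation of \eqref{tn} via the log-canonical pair $(\log\Phi,\Psi)$ are all correct. Two minor remarks. First, you write $H_{n}=\bigl(\tfrac{1}{n^{2}}\Phi^{n}\bigr)_{\ge0}$, dropping the factor $(-1)^{n-1}$ from the paper's definition \eqref{Hm0}; this is in fact consistent with how the paper itself uses $H_{n}$ in \eqref{tn} and in the proof of Proposition~4.2, where that sign has silently disappeared, so the inconsistency is the paper's rather than yours. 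Second, the gauge correction $\chi$ you introduce depends on all the $t_{n}$ through $h$, so after replacing $e^{ad\,\varphi}$ by $e^{ad\,\varphi}e^{ad\,\chi}$ one should note that the Baker--Campbell--Hausdorff combination is again a formal series of order $O(\lambda^{-1})$, so that $\bigtriangledown_{t_{n},\varphi}\varphi$ is still $O(\lambda^{-1})$ as you need in the last step; this is routine and you flag it yourself as the expected analytic bookkeeping.
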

     
Indeed, observing the above results, we derive an alternative expression for the eigenfunctions, one can apply such expression in the following proof of the  twistor structure of the hierarchy.
\begin{proposition}
    If $\varphi (t)=\sum_{n=1}^{\infty}\varphi _{n}(t)\lambda ^{-n}$ satisfies the equations of \eqref{tn}, then $\Phi$ and $\Psi$ defined by \eqref{dressing} are solutions of the hierarchy.
\end{proposition}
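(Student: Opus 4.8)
The plan is a standard dressing argument. Write $W:=e^{ad\,\varphi}$ and $\chi:=x+\sum_{m=1}^{\infty}\tfrac{t_m}{m}\lambda^m$, so that by \eqref{dressing} $\Phi=W(\lambda)$ and $\Psi=W(\chi)$. Since $ad\,\varphi=\{\varphi,\cdot\}$ is a derivation of both the Poisson bracket and the commutative product, $W$ is an automorphism of the Poisson algebra; in particular $W(\lambda^n)=\Phi^n$ and
\begin{eqnarray}
\{\Phi,\Psi\} &=& \{W(\lambda),W(\chi)\}=W\bigl(\{\lambda,\chi\}\bigr)=W\bigl(\{\lambda,x\}\bigr)=W(\lambda)=\Phi,\nonumber
\end{eqnarray}
using $\{\lambda,\lambda^m\}=0$ and $\{\lambda,x\}=\lambda$; this is the normalization recorded earlier. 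Put $H_n=\bigl(\tfrac{1}{n^2}\Phi^n\bigr)_{\ge 0}$ (the polynomial part, in the normalization used in Section~3; cf.~\eqref{Hm0}, \eqref{Hn}). Using $W(\lambda^n)=\Phi^n$, the right-hand side of \eqref{tn} equals
\begin{eqnarray}
-\Bigl(\tfrac{1}{n^2}\Phi^n\Bigr)_{\le -1} &=& \Bigl(\tfrac{1}{n^2}\Phi^n\Bigr)_{\ge 0}-\tfrac{1}{n^2}\Phi^n=H_n-\tfrac{1}{n^2}\Phi^n,\nonumber
\end{eqnarray}
so \eqref{tn} reads precisely $\bigtriangledown_{t_n,\varphi}\varphi=H_n-\tfrac{1}{n^2}\Phi^n$.

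The one structural identity to establish is the covariance of $W$ along the flows. Differentiating $W=e^{ad\,\varphi}$ (the left-trivialized derivative of the exponential) and using $e^{s\,ad\,\varphi}\,(ad\,\psi)\,e^{-s\,ad\,\varphi}=ad\bigl(e^{s\,ad\,\varphi}\psi\bigr)$ gives $\partial_{t_n}W=\bigl(ad\,(\bigtriangledown_{t_n,\varphi}\varphi)\bigr)\circ W$, hence for any $f=f(\lambda,x,t)$
\begin{eqnarray}
\partial_{t_n}\bigl(W(f)\bigr) &=& \bigl\{\bigtriangledown_{t_n,\varphi}\varphi,\ W(f)\bigr\}+W\bigl(\partial_{t_n}f\bigr),\qquad \bigtriangledown_{t_n,\varphi}\varphi=\frac{e^{ad\,\varphi}-1}{ad\,\varphi}\,\partial_{t_n}\varphi.\nonumber
\end{eqnarray}
Here $\bigtriangledown_{t_n,\varphi}\varphi$ depends only on $\varphi$ and $\partial_{t_n}\varphi$; because $\varphi$ carries only negative powers of $\lambda$ and $ad\,\varphi$ maps the space of Laurent series of $\lambda$-degree $\le -1$ into itself, both sides of \eqref{tn} lie in that space and $\tfrac{e^{ad\,\varphi}-1}{ad\,\varphi}$ is a convergent series there with invertible leading term, so \eqref{tn} is a consistent evolution for $\varphi$.

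Now apply the covariance identity. For $f=\lambda$, which does not depend on $t_n$,
\begin{eqnarray}
\partial_{t_n}\Phi &=& \bigl\{\bigtriangledown_{t_n,\varphi}\varphi,\ \Phi\bigr\}=\bigl\{H_n-\tfrac{1}{n^2}\Phi^n,\ \Phi\bigr\}=\{H_n,\Phi\}-\tfrac{1}{n}\Phi^{n-1}\{\Phi,\Phi\}=\{H_n,\Phi\},\nonumber
\end{eqnarray}
since $\{\Phi,\Phi\}=0$. For $f=\chi$ we have $\partial_{t_n}\chi=\tfrac{1}{n}\lambda^n$, so $W(\partial_{t_n}\chi)=\tfrac{1}{n}\Phi^n$, and
\begin{eqnarray}
\partial_{t_n}\Psi &=& \bigl\{H_n-\tfrac{1}{n^2}\Phi^n,\ \Psi\bigr\}+\tfrac{1}{n}\Phi^n=\{H_n,\Psi\}-\tfrac{1}{n}\Phi^{n-1}\{\Phi,\Psi\}+\tfrac{1}{n}\Phi^n=\{H_n,\Psi\},\nonumber
\end{eqnarray}
where the last equality uses $\{\Phi,\Psi\}=\Phi$ from the first paragraph. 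These are exactly equations \eqref{hierarchy}, already shown to be equivalent to the hierarchy of \eqref{1}. Finally, since $\varphi$ carries only negative powers of $\lambda$, $W(\lambda)-\lambda$ and $W(x)-x$ are Laurent series of $\lambda$-degree $\le 0$ and $\le -1$; together with $W(\lambda^n)=\Phi^n$ and the triangular change of variable $\lambda\leftrightarrow\Phi$, this recovers the expansions \eqref{Phi}, \eqref{Psi} of $\Phi$ and $\Psi$, completing the proof.

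I expect the main obstacle to be the covariance identity itself — fixing the left-trivialized derivative of $e^{ad\,\varphi}$ with the correct side and sign, and checking that $e^{ad\,\varphi}$, its inverse $e^{-ad\,\varphi}$, the operator $\tfrac{e^{ad\,\varphi}-1}{ad\,\varphi}$, and the projections $(\cdot)_{\ge 0}$, $(\cdot)_{\le -1}$ all make sense within one graded completion of the Poisson algebra on which these manipulations are valid. Granting that, verifying \eqref{hierarchy} collapses to the two short Poisson computations above, whose whole content is the identities $\{\Phi^n,\Phi\}=0$ and $\{\Phi^n,\Psi\}=n\Phi^n$ (the latter from $\{\Phi,\Psi\}=\Phi$), together with the splitting $\tfrac{1}{n^2}\Phi^n=H_n-\bigl(\tfrac{1}{n^2}\Phi^n\bigr)_{\le -1}$ furnished by \eqref{tn}.
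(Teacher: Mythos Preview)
Your proof is correct and follows essentially the same dressing argument as the paper: both use the covariance identity $\partial_{t_n}\bigl(e^{ad\,\varphi}(f)\bigr)=\{\nabla_{t_n,\varphi}\varphi,\,e^{ad\,\varphi}(f)\}+e^{ad\,\varphi}(\partial_{t_n}f)$ together with the splitting $\nabla_{t_n,\varphi}\varphi=H_n-\tfrac{1}{n^2}\Phi^n$. The only cosmetic difference is in the last cancellation: the paper conjugates back by $e^{-ad\,\varphi}$ and uses $\{\lambda^n,x\}=n\lambda^n$ in the undressed variables, whereas you stay in the dressed picture and use $\{\Phi^n,\Psi\}=n\Phi^n$ via $\{\Phi,\Psi\}=\Phi$, which you establish first; your version is slightly cleaner and also supplies the analytical justification (well-definedness of $\tfrac{e^{ad\,\varphi}-1}{ad\,\varphi}$ and the projections) that the paper leaves implicit.
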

\begin{proof}
    For Laurent series $\Phi$ and $\Psi$, they follow from the definition, then the Lax equation represents
    \begin{eqnarray}
        \partial _{t_{n}}\Phi =\partial _{t_{n}} \left ( e^{ ad \; \varphi}(\lambda )\right ) &=&\left \{ \bigtriangledown _{t_{n},\varphi}\varphi, e^{ad \; \varphi}(\lambda ) \right \} \nonumber\\
        &=&e^{e^{ ad \; \varphi}}\left \{ e^{- ad \; \varphi}\bigtriangledown _{t_{n},\varphi }\varphi,\lambda  \right \} \nonumber\\
        &=&e^{ad \; \varphi}\left \{ e^{-ad \; \varphi}\left ( -\frac{1}{n^{2}}e^{ad \; \varphi}(\lambda ^{n}) \right )_{\le -1},\lambda \right \} \nonumber\\
        &=&e^{ad \; \varphi}\left \{ e^{-ad \; \varphi}H_{n},\lambda  \right \} \nonumber\\
        &=&\left \{ H_{n},\Phi \right \}.\nonumber
    \end{eqnarray}
    Another Lax equation reads as
    \begin{eqnarray}
        \partial _{t_{n}}(\Psi )&=&\partial _{t_{n}}\left (e^{ad \; \varphi} \left(x+\sum_{n=1}^{\infty }\frac{t_{n} }{n}\lambda ^{n} \right) \right) \nonumber\\
        &=&\left \{ \bigtriangledown _{t_{n},\varphi }\varphi , \Psi \right \}+e^{ad \; \varphi }\left ( \frac{1}{n}\lambda ^{n} \right ) \nonumber\\
        &=& \left \{H_{n}-\frac{1}{n^{2}}e^{ad \; \varphi}(\lambda ^{n}) , \Psi \right \}+ e^{ad \; \varphi }\left ( \frac{1}{n}\lambda ^{n} \right )  \nonumber\\     
        &=&e^{ad \; \varphi}\left ( \left \{ e^{-ad \; \varphi}H_{n}-\frac{1}{n^{2}}\lambda ^{n} , x+\sum_{n=1}^{\infty }\frac{t_{n} }{n}\lambda ^{n} \right \}+\frac{1}{n}\lambda ^{n} \right ) \nonumber\\
        &=&e^{ad \; \varphi}\left ( \left \{ e^{-ad \; \varphi}H_{n} , x+\sum_{n=1}^{\infty }\frac{t_{n} }{n}\lambda ^{n} \right \} \right ) \nonumber\\
        &=&\left \{ H_{n},\Psi \right \}. \nonumber
    \end{eqnarray}
\end{proof}
\begin{definition}
    Let there are a pair of two functions $\left ( f\left ( \lambda ,x \right ), g\left ( \lambda ,x \right ) \right )$ and a pair of Laurent series $\left ( \Phi, \Psi \right )$, where $\Phi$ and $\Psi$ are forms of equations \eqref{Phi} and \eqref{Psi} respectively. Suppose they satisfy the canonical Poisson relation $\left \{f\left ( \lambda ,x \right ), g\left ( \lambda ,x \right ) \right \}=f\left ( \lambda ,x \right )$, $F=f\left ( \Phi, \Psi \right )$ and $G=g\left ( \Phi, \Psi \right )$ are both Taylor in $\lambda$ as
    \begin{eqnarray}
        \left ( f\left ( \Phi ,\Psi \right ) \right )_{\le -1}=0,\;\;\;\left ( g\left ( \Phi ,\Psi \right ) \right )_{\le -1}=0.\nonumber
    \end{eqnarray} 
    Then the pair $\left ( \Phi, \Psi \right )$ is a solution of the hierarchy, for this reason, $\left ( f,g \right )$ is the twistor structure of this solution. 
 \end{definition}

\begin{theorem}
The hierarchy with twistor structure in their solutions, i.e., if $\left ( \Phi, \Psi \right )$ is a solution to the dispersionless equation \eqref{1}, then there exists twistor structure $\left ( f,g \right )$ of this hierarchy.    
\end{theorem}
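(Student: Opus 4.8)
\medskip
\noindent\textit{Sketch of the intended proof.} The plan is to follow the twistor-construction scheme of Takasaki--Takebe: read off a candidate pair $(f,g)$ from the value of the solution at one base point of the time lattice, and then show that the Taylor conditions of the Definition are propagated along all the flows. Fix $t=0$ (i.e.\ $t_{n}=0$ for every $n$; any other base point serves as well). By the first proposition of this section the solution admits a dressing function $\varphi=\varphi(t)$ with $\Phi=e^{ad \; \varphi}(\lambda)$ and $\Psi=e^{ad \; \varphi}\bigl(x+\sum_{n}\frac{t_{n}}{n}\lambda^{n}\bigr)$, so at the base point $\Phi|_{t=0}=e^{ad \; \varphi(0)}(\lambda)$ and $\Psi|_{t=0}=e^{ad \; \varphi(0)}(x)$. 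I would then set
\begin{eqnarray}
 f:=e^{-ad \; \varphi(0)}(\lambda),\qquad g:=e^{-ad \; \varphi(0)}(x),\nonumber
\end{eqnarray}
regarded as functions of the phase-space variables $(\lambda,x)$; these are exactly the functions expressing $\lambda$ and $x$ through $(\Phi,\Psi)$ at the base point, and they do not depend on $t$. Since $ad \; \varphi(0)=\{\varphi(0),\cdot\}$ is a derivation of both the associative product and the Poisson bracket, $e^{-ad \; \varphi(0)}$ is a Poisson automorphism, whence $\{f,g\}=e^{-ad \; \varphi(0)}(\{\lambda,x\})=e^{-ad \; \varphi(0)}(\lambda)=f$, which is the canonical relation required of a twistor structure.

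Write $F:=f(\Phi,\Psi)$ and $G:=g(\Phi,\Psi)$. At the base point the automorphism identity $e^{ad \; \varphi(0)}(h(\lambda,x))=h(e^{ad \; \varphi(0)}(\lambda),e^{ad \; \varphi(0)}(x))$ gives $F|_{t=0}=e^{ad \; \varphi(0)}(f(\lambda,x))=\lambda$ and $G|_{t=0}=x$, both Taylor in $\lambda$. Along the flows, the Leibniz rule for $\{\cdot,\cdot\}$ together with the Lax equations \eqref{hierarchy} yield
\begin{eqnarray}
 \partial_{t_{n}}F=f_{\Phi}\{H_{n},\Phi\}+f_{\Psi}\{H_{n},\Psi\}=\{H_{n},F\},\qquad \partial_{t_{n}}G=\{H_{n},G\}.\nonumber
\end{eqnarray}
Now decompose $F=F_{\ge 0}+F_{<0}$ in powers of $\lambda$. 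Because $H_{n}$ is a polynomial in $\lambda$ by \eqref{Hm0} and $F_{\ge 0}$ contains only non-negative powers of $\lambda$, the bracket $\{H_{n},F_{\ge 0}\}=\lambda(\partial_{\lambda}H_{n}\,\partial_{x}F_{\ge 0}-\partial_{x}H_{n}\,\partial_{\lambda}F_{\ge 0})$ contains only powers $\ge 1$, so $(\{H_{n},F_{\ge 0}\})_{<0}=0$. Projecting $\partial_{t_{n}}F=\{H_{n},F_{\ge 0}\}+\{H_{n},F_{<0}\}$ onto the negative powers therefore yields the closed linear system
\begin{eqnarray}
 \partial_{t_{n}}\bigl(F_{<0}\bigr)=\bigl(\{H_{n},F_{<0}\}\bigr)_{<0},\qquad n=1,2,\dots,\nonumber
\end{eqnarray}
with vanishing initial value $F_{<0}|_{t=0}=0$. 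Since these equations are linear and homogeneous in $F_{<0}$ and the flows are mutually compatible (coming from a genuine solution of the hierarchy), by uniqueness $F_{<0}\equiv 0$ — equivalently, an induction on the order of the $t$-derivatives at $t=0$ shows that they all vanish. The same argument gives $G_{<0}\equiv 0$. Hence $F$ and $G$ are Taylor in $\lambda$, so $(f,g)$ is a twistor structure of the given solution of \eqref{1}.

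The step I expect to be the main obstacle is this last one, namely the invariance of the Taylor condition along the whole hierarchy; the observation that unlocks it is the elementary fact that $\{H_{n},F_{\ge 0}\}$ carries no negative powers, which decouples $F_{<0}$ into a homogeneous linear evolution with zero initial data. A more computational alternative would be to write $F=e^{ad \; \varphi(t)}\,U(t)\,e^{-ad \; \varphi(0)}(\lambda)$ with $U(t)=\exp\bigl(\sum_{n}t_{n}\,ad(\lambda^{n}/n^{2})\bigr)$ and to track the cancellation of negative powers directly from the evolution equation \eqref{tn} for $\varphi$, but the decoupling argument above is cleaner and avoids that bookkeeping.
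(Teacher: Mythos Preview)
Your proof is correct and follows essentially the same route as the paper: the same dressing-based definition $f=e^{-ad\,\varphi(0)}(\lambda)$, $g=e^{-ad\,\varphi(0)}(x)$, the same Poisson-automorphism check of $\{f,g\}=f$, and the same Lax evolution $\partial_{t_n}F=\{H_n,F\}$. The only difference is in how the Taylor condition is propagated: the paper argues that every iterated $t$-derivative of $F$ at $\vec t=\vec 0$ carries no negative $\lambda$-powers, while your decoupling observation $(\{H_n,F_{\ge 0}\})_{<0}=0$ recasts this as a homogeneous linear evolution for $F_{<0}$ with zero initial data---a cleaner packaging of the same idea (and you yourself note the equivalence).
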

\begin{proof}
    First, we show that $\left \{ f,g\right \}=f$. Let $exp\left ( ad \; \varphi( \Vec{t} ) \right )$ be the dressing operator corresponding to $\left ( \Phi, \Psi \right )$. When $\vec{t}=\vec{0}, \;\vec{t}=(t_{1}, t_{2}, \cdots)$. Set 
    \begin{eqnarray}
        f\left ( \lambda ,x \right )=e^{-ad \; \varphi }|_{\vec{t}=\vec{0}} \; \lambda ,\;\;\; g\left ( \lambda ,x \right )=e^{-ad \; \varphi }|_{\vec{t}=\vec{0}} \; x .\label{fg1}
    \end{eqnarray}
    In fact,
    \begin{eqnarray}
        \left \{ f\left ( \lambda ,x \right ),g\left ( \lambda ,x \right ) \right \} =e^{-ad \; \varphi }|_{\vec{t}=\vec{0}}\left \{ \lambda ,x \right \}=e^{-ad \; \varphi }|_{\vec{t}=\vec{0}} \lambda =f\left ( \lambda ,x \right ) .\nonumber
    \end{eqnarray}
    For equation \eqref{dressing},
    \begin{eqnarray}
        \Phi |_{\vec{t}=\vec{0}} =e^{ad \; \varphi } |_{\vec{t}=\vec{0}} \; \lambda ,\;\;\; \Psi |_{\vec{t}=\vec{0}}=e^{ad \; \varphi }|_{\vec{t}=\vec{0}} \; x. \nonumber
    \end{eqnarray}
    Then, the following relations are obtained from equation \eqref{fg1},
    \begin{eqnarray}
        \begin{aligned}\label{fg2}
            f\left (\Phi |_{\vec{t}=\vec{0}},\Psi |_{\vec{t}=\vec{0}} \right)
            &&=e^{ad \; \varphi } |_{\vec{t}=\vec{0}} f\left ( \lambda ,x \right )=\lambda,\\
            g\left (\Phi |_{\vec{t}=\vec{0}},\Psi |_{\vec{t}=\vec{0}} \right)
            &&=e^{ad \; \varphi } |_{\vec{t}=\vec{0}} g\left ( \lambda ,x \right )=x. 
        \end{aligned}
    \end{eqnarray}
    As a result,
    \begin{eqnarray}
        \left \{ f \left ( \Phi,\Psi \right ),g\left ( \Phi,\Psi \right ) \right \}=e^{ad \; \varphi }|_{\vec{t}=\vec{0}} \left \{ f\left ( \lambda ,x \right ),g\left ( \lambda ,x \right ) \right \}=e^{ad \; \varphi }|_{\vec{t}=\vec{0}} f\left ( \lambda ,x \right )=f \left ( \Phi,\Psi \right ). \nonumber
    \end{eqnarray}
    
    Secondly, prove that $\left ( f\left ( \Phi ,\Psi \right ) \right )_{\le -1}=0$. Since $\Phi$ and $\Psi$ satisfy the Lax equations \eqref{hierarchy}, then
    \begin{eqnarray}
        \frac{\partial f\left ( \Phi ,\Psi  \right ) }{\partial t_{n}}
        &=&f_{\Phi }\left ( \Phi ,\Psi \right )\cdot \Phi _{t_{n}}+f_{\Psi }\left ( \Phi ,\Psi \right )\cdot \Psi _{t_{n}} \nonumber\\
        &=&f_{\Phi }\left ( \Phi ,\Psi \right )\left \{ H_{n},\Phi \right \}+f_{\Psi }\left ( \Phi ,\Psi \right )\left \{ H_{n},\Psi \right \} \nonumber\\
        &=&\left \{ H_{n},f\left ( \Phi ,\Psi  \right ) \right \}.\label{ftn}
    \end{eqnarray}
    When $\vec{t}=\vec{0}$, based on the equation \eqref{fg2} and \eqref{ftn}, the result is as follows
    \begin{eqnarray}
        \frac{\partial f\left ( \Phi ,\Psi  \right ) }{\partial t_{n}}|_{\vec{t}=\vec{0}}
        &=&\left \{ H_{n},f\left ( \Phi=0 ,\Psi=0  \right ) \right \}\nonumber\\
        &=&\left \{ H_{n},\lambda \right \} \nonumber\\
        &=&-\lambda H_{n,x}. \nonumber
    \end{eqnarray}
    Obviously, the above equation does not contain the negative part of $\lambda$. Beyond that, for equation $\left (\partial /\partial _{t} \right)^{\alpha}f\left (\Phi,\Psi \right)|_{(\vec{t}=\vec{0} )}$, regardless of the value of $\alpha$, the Taylor expansion coefficients of this equation at $\vec{t}=\vec{0}$ do not contain negative terms of $\lambda$. Ultimately we can prove that $\left ( f\left ( \Phi ,\Psi \right ) \right )_{\le -1}=0$. The same reasoning leads to $\left ( g\left ( \Phi ,\Psi \right ) \right )_{\le -1}=0$.
\end{proof}

\bigskip

\section{\sc \bf The relevant nonlinear Riemann-Hilbert problem for constructing solutions}
In this section, we investigate the long-time behaviour and the possible wave breaking properties which are important aspects in the study of dispersionless equations. First and the key step is to relate the dispersionless equations to the nonlinear Riemann-Hilbert problem using the Manakov-Santini method. In fact, solving equation \eqref{1} can be transformed into studying the nonlinear Riemann-Hilbert problem. This result is shown in the following theorem.
\begin{theorem}
    Consider the vector nonlinear Riemann-Hilbert problem on the real line as
    \begin{eqnarray}
        \vec{\pi }^{+}(\lambda)=\vec{R} ( \vec{\pi }^{-}(\lambda)),\;\;\;\lambda \in \mathbb{R} \label{RH1},
    \end{eqnarray}
     in which $\vec{\pi }^{+}(\lambda ) ,\; \vec{\pi }^{-}(\lambda ) \in \mathbb{C}^{2}$ are two-dimensional vector functions resolved in the upper and lower halves of the complex $\lambda$ plane, respectively,  and normalised them into the following form as
     \begin{eqnarray}                                                           \vec{\pi }^{\pm}(\lambda )=\begin{pmatrix}
         \pi _{1}^{\pm }(\lambda) \\[10pt]
         \pi _{2}^{\pm }(\lambda) 
         \end{pmatrix}
         = \begin{pmatrix}
           \lambda +2u\\[10pt]
           \frac{t}{2}\lambda ^{2}+ut\lambda +y\lambda +x+2uy+2u^{2}t+4t\partial _{x}^{-1}u_{y}
           \end{pmatrix} 
           + o (\lambda ^{-1} ) ,\;\;\;  \left | \lambda  \right |\gg 1. \label{pi}
     \end{eqnarray}
     In fact, 
     \begin{eqnarray}
         \partial _{x}^{-1}u=\lim_{\lambda  \to \infty}\frac{\lambda }{2}\left (\pi _{2}^{\pm }(\lambda ) -y\pi _{1}^{\pm }(\lambda)-\frac{t}{2}\pi_{1}^{\pm ^2}(\lambda) -x \right ),
     \end{eqnarray}
     and the differentiable spectral datum $\vec{R}(\vec{\xi})=\left ( R_{1}(\xi _{1},\xi _{2}) ,R_{2}(\xi _{1},\xi _{2}) \right )$, the vector $\vec{\xi }\in \mathbb{C}^{2},\;\lambda\in\mathbb{R}$, satisfying the following given realistic constraint as
     \begin{eqnarray}
         \vec{R}(\overline{\vec{R}(\bar{\vec{\xi }})})=\vec{\xi }. \label{constraint}
     \end{eqnarray}
     
     Applying the vector fileds Lax pair $L_{1}$, $L_{2}$ defined by equations \eqref{L1} and \eqref{L2} on the above nonlinear Riemann-Hilbert problem \eqref{RH1}, there are $L_{j}\vec{\pi }^{+}(\lambda)=JL_{j}\vec{\pi }^{-}(\lambda)$, $j=1,2$, where $J$ satisfies the Jacobian matrix $J_{mn} =\partial R_{m}/\partial \xi _{n}$, $(m,n=1,2)$. Then, assuming that both forms of the solution are unique, $\vec{\pi }^{\pm}(\lambda)$ of the Riemann-Hilbert problem \eqref{RH1} are common eigenfunctions of the vector fields, i.e., $L_{j}\vec{\pi }^{\pm}(\lambda)=0$ and $u$ is the solution to the equation \eqref{1}. Combining $\vec{R}(\vec{\xi})$ with constraint \eqref{constraint}, then the solutions $u$ is real, i.e., $u\in\mathbb{R}$.   
\end{theorem}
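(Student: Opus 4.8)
The plan is to run the Manakov--Santini scheme. The normalisation \eqref{pi} determines $u$ (and its $x$-primitives) from the solution: $u=\tfrac12\lim_{\lambda\to\infty}(\pi_{1}^{\pm}(\lambda)-\lambda)$, and the displayed formula for $\partial_{x}^{-1}u$ is read off from \eqref{pi} in the same way; since the polynomial part of \eqref{pi} is common to the two half-plane pieces, $\pi_{1}^{+}$ and $\pi_{1}^{-}$ return the same $u$. With this $u$ fixed, the coefficients of the vector fields $L_{1},L_{2}$ in \eqref{L1}--\eqref{L2} are determined, so they are genuine first order differential operators in $(x,y,t,\lambda)$, and the task reduces to showing that $\vec\pi^{\pm}$ are their common eigenfunctions.

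The key mechanism is that $L_{j}$ commutes through the spectral data. Because $\vec R$ depends only on $\vec\xi=\vec\pi^{-}$ and carries no explicit dependence on $(x,y,t,\lambda)$, while $L_{j}$ acts as a derivation, differentiating the jump $\vec\pi^{+}=\vec R(\vec\pi^{-})$ gives $L_{j}\vec\pi^{+}=J\,L_{j}\vec\pi^{-}$ on $\mathbb R$ with $J_{mn}=\partial R_{m}/\partial\xi_{n}$ evaluated at $\vec\xi=\vec\pi^{-}$, which is the relation stated; hence the sectionally holomorphic vector $\vec\chi$ equal to $L_{j}\vec\pi^{\pm}$ in $\mathbb C^{\pm}$ solves the homogeneous linear Riemann--Hilbert problem with jump $J$. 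Next I would examine $\lambda\to\infty$: the normalisation \eqref{pi} is arranged precisely so that $L_{j}$ annihilates the growing part of $\vec\pi^{\pm}$ --- the $2u$ in $\pi_{1}^{\pm}$ cancels the $O(\lambda)$ term of $L_{1}\pi_{1}^{\pm}$, the degree-two polynomial together with the potential $\partial_{x}^{-1}u_{y}$ in $\pi_{2}^{\pm}$ cancel the $O(\lambda^{2})$ and $O(\lambda)$ terms, and the lower-order coefficients in \eqref{pi} coincide with those of the formal common eigenfunctions $\Phi,\Psi$ of Sections 2--3, so all remaining contributions cancel as well. Thus $L_{j}\vec\pi^{\pm}$ has vanishing asymptotic series (since $L_{j}\Phi=L_{j}\Psi=0$) and in particular $L_{j}\vec\pi^{\pm}(\lambda)\to0$ at infinity. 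Now invoke the assumed uniqueness: the homogeneous problem $\vec\chi^{+}=J\vec\chi^{-}$, with $\vec\chi^{\pm}$ holomorphic in $\mathbb C^{\pm}$ and decaying at infinity, inherits uniqueness from \eqref{RH1}--\eqref{pi} and so has only $\vec\chi\equiv0$. Hence $L_{j}\vec\pi^{\pm}\equiv0$ for $j=1,2$, i.e.\ $\vec\pi^{\pm}$ are common eigenfunctions of the Lax pair.

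With $L_{1}\vec\pi^{\pm}=L_{2}\vec\pi^{\pm}=0$ for two functionally independent eigenfunctions, the commutator $[L_{1},L_{2}]$ also annihilates them, which forces $[L_{1},L_{2}]=0$; by \eqref{l12}--\eqref{H12} this is exactly equation \eqref{1}, so $u$ solves the dispersionless system. For the reality statement I would set $\widehat{\vec\pi}^{\pm}(\lambda):=\overline{\vec\pi^{\mp}(\bar\lambda)}$; then $\widehat{\vec\pi}^{+}$ is holomorphic in $\mathbb C^{+}$ and $\widehat{\vec\pi}^{-}$ in $\mathbb C^{-}$, and on the real axis the constraint \eqref{constraint}, applied with $\vec\xi=\overline{\vec\pi^{-}(\lambda)}$, turns $\widehat{\vec\pi}^{+}=\overline{\vec\pi^{-}}$ into $\widehat{\vec\pi}^{+}=\vec R(\widehat{\vec\pi}^{-})$, so $\widehat{\vec\pi}^{\pm}$ solves the same Riemann--Hilbert problem. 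Since $x,y,t$ are real and $\partial_{x}^{-1}$ commutes with conjugation, $\widehat{\vec\pi}^{\pm}$ carries the normalisation \eqref{pi} with $u$ replaced by $\bar u$; uniqueness forces $\widehat{\vec\pi}^{\pm}=\vec\pi^{\pm}$, and comparing the $O(\lambda^{0})$ coefficient of the first component gives $\bar u=u$, i.e.\ $u\in\mathbb R$.

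I expect the main obstacle to be the asymptotic bookkeeping in the second paragraph: one must carry the expansion of $\vec\pi^{\pm}$ far enough, and use the recursion relations of Section 2, to be sure that $L_{j}\vec\pi^{\pm}$ genuinely decays at infinity rather than merely remaining bounded, and one must identify the precise function class in which the linearized Riemann--Hilbert problem inherits uniqueness from the nonlinear one. By contrast the chain-rule identity, the commutator computation, and the conjugation symmetry are routine.
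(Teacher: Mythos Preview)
Your proposal is correct and follows essentially the same Manakov--Santini scheme as the paper: linearize the jump via the chain rule to get $L_{j}\vec\pi^{+}=J\,L_{j}\vec\pi^{-}$, check that $L_{j}\vec\pi^{\pm}\to0$ as $\lambda\to\infty$, invoke uniqueness of the (linearized) Riemann--Hilbert problem to conclude $L_{j}\vec\pi^{\pm}\equiv0$, and then use the reality constraint \eqref{constraint} together with uniqueness to obtain $\overline{\vec\pi^{+}}=\vec\pi^{-}$ and hence $u\in\mathbb R$. Your write-up is actually more detailed than the paper's own proof, which merely sketches these steps and neither spells out the commutator argument for $[L_{1},L_{2}]=0$ nor the explicit conjugation symmetry $\widehat{\vec\pi}^{\pm}(\lambda)=\overline{\vec\pi^{\mp}(\bar\lambda)}$ that you use.
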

\begin{proof}
Let the vector fileds Lax pair $L_{1}$ and $L_{2}$ act on $\vec{\pi }^{\pm}(\lambda )$ in equation \eqref{pi}. In fact, when $\lambda \to \infty$, as a result,
\begin{eqnarray}
    L_{j}\vec{\pi }^{\pm}(\lambda)\to 0,\;\;\; j=1,2.\nonumber
\end{eqnarray}
Based on the nonlinear Riemann-Hilbert problem linearized version
\begin{eqnarray}
    L_{j}\vec{\pi }^{+}(\lambda)=JL_{j}\vec{\pi }^{-}(\lambda),\;\;\;j=1,2 ,\nonumber
\end{eqnarray}
where $J$ is the Jacobi matrix satisfying 
\begin{eqnarray}
    J=\begin{pmatrix}
      \dfrac {\partial R_{1}}{\partial \xi _{1} } & \dfrac {\partial R_{2}}{\partial \xi _{1} } \\[10pt]
      \dfrac {\partial R_{1}}{\partial \xi _{2} } & \dfrac {\partial R_{2}}{\partial \xi _{2} }
      \end{pmatrix}. \nonumber
\end{eqnarray}
Obviously, combining equation \eqref{RH1}, the vectors $L_{j}\vec{\pi }^{\pm}(\lambda)$ solve the linearised Riemann-Hilbert problem.
By uniqueness, we infer that $\vec{\pi }^{\pm}(\lambda)$ are shared eigenfunctions of the Lax pair $L_{j}$, obtaining 
 \begin{eqnarray}
     L_{j}\vec{\pi }^{\pm}(\lambda)=0,\;\;\;j=1,2. \nonumber
 \end{eqnarray}
This indicates that the $\vec{\pi }^{\pm}(\lambda)$ are the solutions of the nonlinear Riemann-Hilbert problem. From the equation \eqref{pi}, one has
\begin{eqnarray}
     \partial _{x}^{-1}u=\lim_{\lambda  \to \infty}\frac{\lambda }{2}\left (\pi _{2}^{\pm }(\lambda ) -y\pi _{1}^{\pm }(\lambda)-\frac{t}{2}\pi_{1}^{\pm ^2}(\lambda) -x \right ). \nonumber
\end{eqnarray}
Actually, it is the solution to dispersionless system \eqref{1}. 

Similarly, by uniqueness, combining constraint \eqref{constraint} of $\vec{R}(\vec{\xi})$ and the nonlinear Riemann-Hilbert problem \eqref{RH1}, we can get the following expression 
\begin{eqnarray}
    \overline{\vec{\pi}^{+}}(\lambda) =\vec{\pi }^{-}(\lambda).\nonumber
\end{eqnarray}
It follows that, the solution $u$ is real, i.e., $u\in\mathbb{R}$.
\end{proof}

\bigskip
\bigskip
\textbf{Acknowledgements:} This work is supported by the National Natural Science Foundation of China under Grant Nos. 12271136, 12171133 and 12171132.
\bigskip

\end{document}